\def\citeapos#1{\citeauthor{#1}'s (\citeyear{#1})}
\newdimen\slantmathcorr
\def\oversl#1{
\setbox0=\hbox{$#1$}
\slantmathcorr=\wd0
\hskip 0.2\slantmathcorr \overline{\hbox to 0.8\wd0{%
\vphantom{\hbox{$#1$}}}}
\hskip-\wd0\hbox{$#1$}
}
\def\undersl#1{
\setbox0=\hbox{$#1$}
\slantmathcorr=\wd0
\underline{\hbox to 0.8\wd0{%
\vphantom{\hbox{$#1$}}}}
\hskip-0.8\wd0\hbox{$#1$}
}
\renewcommand{\email}[2][]{%
  \ifx\emails\@empty\relax\else{\g@addto@macro\emails{,\space}}\fi%
  \@ifnotempty{#1}{\g@addto@macro\emails{\textrm{(#1)}\space}}%
  \g@addto@macro\emails{#2}%
}
\newtheorem{lemma}{Lemma}
\newtheorem{prop}{Proposition}
\newtheorem{claim}{Claim}
\newtheorem{cor}{Corollary}
\newtheorem{defn}{Definition}
\newcommand{\R}{\mathbb{R}}
\newcommand{\argmax}{\operatornamewithlimits{argmax}}
\newcommand{\Comp}{\operatornamewithlimits{\varbigcirc}}
\newcommand{\gB}{\mathcal B}
\newcommand{\gR}{\mathcal R}
\newcommand{\gF}{\mathcal F}
\newcommand{\gI}{\mathcal I}
\title{An Algebraic Approach to Revealed Preference}
\thanks{\footnotesize A previous version of this paper was circulated under the title \say{A Functional Approach to Revealed Preferences.} We gratefully acknowledge Marco Castillo, Christopher Chambers, Thomas Demuynck, Federico Echenique, and Matthew Pollison for useful comments.}
\author[Freer]{Mikhail Freer}
\address[Freer]{University of Essex}
\email[Freer]{m.freer@essex.ac.uk}
\author[Martinelli]{C{\'e}sar Martinelli}
\address[Martinelli]{George Mason University}
\email[Martinelli]{cmarti33@gmu.edu}
\begin{document}

\maketitle

\begin{abstract}
We propose and develop an algebraic approach to revealed preference.  Our approach dispenses with non algebraic structure, such as topological assumptions.  We provide algebraic axioms of revealed preference that subsume previous, classical revealed preference axioms, as well as generate new axioms for behavioral theories, and show that a data set is rationalizable if and only if it is consistent with an algebraic axiom. 
\end{abstract}

\section{Introduction}

The revealed preference approach to consumer choice, pioneered by \cite{samuelson1938note}, builds on the fact that, although we cannot observe the complete preference relation profiles of economic agents, we can observe their choices over some budget sets. Starting with the work \cite{richter1966} and \cite{afriat1967}, this approach has been used to construct tests of rational decision making \cite[see][for a recent comprehensive overview]{chambers2016revealed}.

\subsection*{Contribution}	
We propose an algebraic version of revealed preference approach.
That is, we consider theories about preferences in their logical structure together with the underlying algebraic structure.
By a theory about preferences, we mean a statement about preferences such as ``If $x$ is better than $y$ then $f(x)$ better than $f(y)$, where $f\in \gF$.'' In this statement, $f$ is some function over alternatives, and $\gF$ is a family of functions that actually defines the theory. The algebraic structure we impose considers the algebra of $(\gF,\circ)$, where $\circ$ is the composition operator.
In particular, we propose an \textit{algebraic axioms of revealed preferences}, and show that if $(\gF,\circ)$ is a group,\footnote{
	A tuple $(\gF,\circ)$ is said to be group if the set of functions $\gF$ contains an identify function,   $\gF$ is closed, every function in $\gF$ has an inverse that also belongs to $\gF$, and the composition operator $\circ$ is associative.
} 
then the observed set of data could be generated by a binary relation consistent with the theory $\gF$ if and only if it is consistent with the algebraic axiom of revealed preferences.

The theory as defined does not include the assumption of transitivity, which is considered separately. Hence, we define the weak and strong algebraic axioms  similarly to the weak and strong axioms of revealed preferences.   We show that the weak (algebraic) axiom is the equivalent to the possibility that the data could be generated by a binary relation consistent with the theory $\gF$; while the strong (algebraic) axiom is equivalent to the possibility that the data could be generated by a transitive binary relation consistent with the theory $\gF$.   Moreover, we show that if we do not require  the preference relation to be transitive, we obtain completeness for free.  That is, a data could be generated by a binary relation consistent with theory $\gF$ if and only if it could be generated by a complete binary relation consistent with the theory $\gF$.	If we do require the preference relation to be transitive, in addition to being consistent, we need to add a mild assumption in order to get the complete rationalization.
	
We show that our result unifies classical results \citep[see][]{afriat1967,varian1983} and more recent contributions \citep[see][]{forges2009,heufer2013testing,nishimura2017comprehensive,polisson2015,castillofreer2016}  in revealed preference theory.
In particular, we provide applications to the existence of transitive, homothetic, and quasilinear preferences, as well as preferences that satisfy independence. Importantly, the latter example includes both objective and subjective probability versions.
Finally, we consider new applications to the behavioral theories of sequential choice procedures and good enough choice.
Like older rational choice approaches, both of these theories require that observed choices are generated  by a well-behaving binary relation.
	We show that, with the help of our general result,  standard classical revealed preference tests can be extended to incorporate these theories.

\subsection*{Related literature}
	Our work is linked to the literature on generalized revealed preferences.
	Several authors in this literature provide a generalization of the revealed preference approach, but keeping some topological assumptions in place. Topological assumptions are necessary to guarantee existence of a convenient utility function representing the underlying preference relation.	Seminal examples are \cite{chavas}, \cite{forges2009}, and \cite{nishimura2017comprehensive}, who generalize  \cite{afriat1967} theorem for general shapes of budgets and topological spaces. Recently, \cite*{polisson2015} have proposed a lattice approach, and provided conditions that guarantee the rationalization of an observed set of data with theories such as expected utility, ranked dependent expected utility, and cumulative prospect theory.
	The papers mentioned above construct tests which can be easily applied to the data.

	Other authors generalize \citeapos{szpilrajn1930} result, concentrating on the completion of the revealed preference relation. Seminal papers by \cite{suzumura1976remarks}, \cite{duggan1999general}, and \cite{demuynck2009} provide revealed preference tests (in the shape of \textit{Suzumura consistency}) for transitive, acyclic, homothetic, and convex preferences.
	However, the Suzumura consistency condition may be complicated for practical implementation, which is an important difference with the papers mentioned in the previous paragraph.  
	Let us also note that Suzumura consistency is not a revealed preference axiom in the orthodox sense, as it is not stated in terms of choices and budgets.
	In this sense, we provide a link between these two strands of revealed preference research.  That is, we adopt a scope of theories comparable to one presented in \cite{demuynck2009}, while providing a tractable and simple revealed preference axiom.

	Unlike most of the previous literature, we do not require the revealed preference relation to be generated by a complete preference relation.    As we know from \cite{chambers2014axiomatic}, completeness does not bring additional empirical content if transitivity is required.   Our results in terms of adding completeness to other desiderata can be interpreted in that spirit.  That is, instead of taking completeness as a desiderata,  we focus on whether we can get it and at what costs.
	Sometimes, it appears that completeness is obtained for free, in the sense that adding the assumption of completeness to other assumptions about the binary relation does not bring any additional empirical content. 

\subsection*{Organization of the paper}
	The remainder of the paper is organized as follows.
	We present the necessary definitions for algebraic revealed preferences in section 2.
	We show our rationalizability result in section 3.
	We present applications in Section 4.
	We provide some concluding remarks in section 5.
	All proofs omitted in the text are collected in an Appendix.

\section{Preliminaries}
\noindent
Let $X$ be the space of alternatives.  
Let $R\subseteq X\times X$ be a \emph{preference relation}, that is a reflexive and transitive binary relation. 
A binary relation is said to be \emph{reflexive} if  $(x,x)\in R$ for every $x\in X$. 
A binary relation is said to be \emph{transitive} if $(x,y) \in R$ and $(y,z)\in R$ implies $(x,z)\in R$ for every $x,y,z\in X$. 
A binary relation is said to be \emph{complete} if all pairs are comparable, i.e. $(x,y)\in R$ or $(y,x)\in R$ for every $x,y\in X$.
Let $P(R)$ is the (asymmetric) strict part of the relation; that is, $P(R) = \{(x,y)\in R: \ (y,x)\notin R\}$.

\subsection{Theories about preferences.}
Next we present our notion of a theory about preference relations.  Our notion is meant to capture the idea that a theory is a collection of desirable  properties of preference relations.
We first discuss some usual theories to motivate our notion.
\newline

\noindent
{\bf Example:}
Consider a theory that imposes \emph{homothetic preferences}, that is $(x,y)\in R$  implies $(\alpha x, \alpha y)\in R$ for every $\alpha\in \R_{++}$.
\newline

\noindent
{\bf Example:}
Consider a theory that imposes \emph{quasilinear preferences}, that is $(x,y)\in R$ implies $(x + \alpha e, y + \alpha e)\in R$ for every $\alpha\in \R$, where $e =(1,0,\ldots,0,\ldots,0)$ is the vector of zeros with unique 1 element.
\newline

\noindent
We can generalize the structure observed in both of these theories as follows.
Both $\alpha x$ and $x+\alpha e$ can be in general presented as functions $f: X\rightarrow X$.
Then, the theory itself can be described as a collection of functions $\gF$ preserving the preference relation.
For homothetic preferences the allowed functions would be all linear functions with positive slope and zero intercept.
For quasilinear preferences the allowed functions would be all linear functions with slope of one and intercept of $\alpha e$ for every real $\alpha$.
We generalize this idea allowing for other collections of functions, so that every allowed collection of functions $\gF$ defines a theory.
In particular, we require the collection of functions $\gF$ endowed with the composition operator to be a group.

\begin{defn}
A tuple $(\gF,\circ)$ is a {\bf group} if it
	\begin{itemize}
		\item contains identity:
		\\
		$I\in \gF$, where $I(x)=x$ for every $x\in X$;

		\item is closed: 
		\\ $f,f'\in \gF$ implies $f\circ f' = f'' \in \gF$;
		\item has inverse element: \\
		$\forall f\in \gF \ \exists f^{-1}\in \gF$ such that $f\circ f^{-1} = f^{-1}\circ f = I$;
		\item is associative: 
		\\ 
		$(f\circ f')\circ f'' = f\circ(f'\circ f'')$ for all $f,f',f''\in \gF$.
	\end{itemize}
\end{defn}

As is well-known, the collection of all bijective functions from a set to itself constitutes a group.  
Thus, the assumption of group structure is not very restrictive once we can characterize a theory as a collection of transformations.  
The requirement of identify guarantees that applying theory we are not loosing information about preferences.  
Closedness implies theory is self-contained, without this requirement the theory may generate some implications which are not directly prescribed by this theory.
Associativity is a technical property of the composition operator.

The requirement of inverse element is more substantial; intuitively, it means that if it is desirable that $(x,y) \in R$ implies $(f(x),f(y)) \in R$, then the reverse implication is desirable as well.  This inverse implication is actually guaranteed by existence of the inverse element for every function $f\in \gF$. As we want to consider not only the complete relations being consistent with the theory, the reverse implication is crucial. For the example of its usefulness, one can see \cite{aumann1962utility} where it is illustrated how the independence axiom needs to be modified for the representation of incomplete preference relation satisfying transitivity.

\begin{defn}
Let $(\gF,\circ)$ be a group.
A preference relation $R$ is {\bf consistent with theory $\gF$} if 

\begin{itemize}
	\item [] $(x,y) \in R$ implies $(f(x),f(y))\in R$ for every $f\in\gF$.
\end{itemize}

\end{defn}

\noindent
Note that our notion of theory does not involve completeness or transitivity. Neither of these assumptions can be expressed in the simple terms of the theory as we define it.  Completeness and/or transitivity will be considered as additional assumptions when introducing the notions of rationalization of the data set with a preference relation. This also allows us to consider the interaction between these assumptions and our notion of the theory.

\subsection{Data and Rationalization.}
The essence of the revealed preference problem is to extract the (unobserved) preference relations which generated (observed) choices over budgets.
If there is such a preference relation, then the corresponding data set (collection of choices from budgets) is rationalizable.
Next, we formally define the rationalizable data sets.

Let $B\subseteq X$ be a budget set, where $B$ is any nonempty subset of $X$.
Let $\gB$ be a collection of budgets.
	Let $C:\gB\rightrightarrows X$ be a choice correspondence.
Denote by $(\gB,C)$ a data set.	Let $x\in \max (B,R)$ be a \emph{maximal point} of budget set $B$ according to preference relation $R$ if $(x,y)\in R$ for every $y\in B$ and $(y,x)\notin P(R)$.\footnote{
		Our notion of maximal point (even being standard for revealed preferences) already implicitly makes an assumption about comparability.
		That is, the chosen point is at least as good as any other point in the budget, and not just an undominated point.
		However, relaxing it would result in the absence of  empirical content of any theory that does not explicitly assumes completeness.
	}
	Rationalizability requires that the chosen points are maximal within the corresponding budget for a preference relation that is consistent with the theory $\gF$. 
	
	We present four different versions of rationalizability, according to whether they include or not transitivity and completeness as desiderata.  Rationalizability without transitivity is denoted `weak' and rationalizability with completeness is denoted `complete.'
\begin{defn}
\label{def:AllRationalizability }
A data set $(\gB,C)$ is 
\begin{itemize}
	\item [] {\bf weakly rationalizable} if there is a preference relation $R^*$ consistent with theory $\gF$,

	\item [] {\bf completely weakly rationalizable} if there is a complete preference relation $R^*$ consistent with theory $\gF$,

	\item [] {\bf rationalizable} if there is a transitive preference relation $R^*$ consistent with theory $\gF$,

	\item [] {\bf completely rationalizable} if there is a transitive and complete preference relation $R^*$ consistent with theory $\gF$,
\end{itemize}
such that 
$$
C(B) = \max (B,R^*) \text{ for every } \ B\in \gB.
$$
\end{defn}

\noindent
The definition of rationalizability specifies that there is a preference relation consistent with a given theory, such that the observed choices can be generated by maximization of this preference relation.  
Notably, we abstain from any restrictions on the cardinality of the data set and any topological structure of the space of alternatives, since we are pursuing a purely algebraic approach.  
A disadvantage of this level of generality is that we cannot talk about a utility function representing the preference relation.

\section{Results}
\label{sec:Results}
\noindent
In what follows, we establish the equivalence between four versions of the axiom of revealed preference and the four notions of rationalization according to $\gF$.  In presenting some of the results, we denote the composition of $f_1,\ldots, f_n\in \gF$  by 
$$
\Comp_{j=1}^n f_j = f_1\circ f_2\circ \ldots \circ f_n.
$$
	
\subsection{Non-complete rationalization}
Following the revealed preference tradition, we define two axioms: weak and strong.  Parallel to standard results in revealed preference theory, the weak axiom does not account for transitivity, while the strong one does.

\begin{defn}
\label{def:WAARP}
A data set satisfies the {\bf weak algebraic axiom of revealed preference (WAARP)} if for every $x_i \in C(B_i)$ and $x_j\in C(B_j)$ and every $f\in \gF$
$$
f(x_i)\in B_j \text{ implies } f^{-1}(x_j)\notin B_i\setminus C(B_i).
$$
\end{defn}

\noindent
Recall that the standard weak axiom requires that if $x_j\in C(B_j)$ is available at the budget $B_i$ then $x_i\in C(B_i)$ should not be available at $B_j$.  Otherwise, there would be a contradiction between $x_j$ being better than $x_i$ and $x_j$ being better than $x_i$.  Since now we incorporate the theory $\gF$, we also need to make sure that there is no $f\in \gF$ such that $x_j$ is better than $f(x_i)$, and $f(x_i)$ is better than $x_j$.  This reasoning illustrates the necessity of the weak axiom.  

\begin{prop}
\label{prop:WeakRationalization}
A data set is weakly rationalizable if and only if it satisfies WAARP.
\end{prop}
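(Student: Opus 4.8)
The plan is to prove the two implications separately: necessity of WAARP is a direct deduction from the definition of rationalization, while the converse is proved by exhibiting an explicit rationalizing relation built from the data, and this is where the real work lies.

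For necessity, suppose $R^*$ is a preference relation consistent with $\gF$ with $C(B)=\max(B,R^*)$ for every $B\in\gB$, and suppose WAARP fails: there are $x_i\in C(B_i)$, $x_j\in C(B_j)$ and $f\in\gF$ with $f(x_i)\in B_j$ and $f^{-1}(x_j)\in B_i\setminus C(B_i)$. From $x_i\in\max(B_i,R^*)$ and $f^{-1}(x_j)\in B_i$ we get $(x_i,f^{-1}(x_j))\in R^*$, and applying $f$ (consistency with $\gF$) gives $(f(x_i),x_j)\in R^*$; from $x_j\in\max(B_j,R^*)$ and $f(x_i)\in B_j$ we get $(x_j,f(x_i))\in R^*$, and applying $f^{-1}$ gives $(f^{-1}(x_j),x_i)\in R^*$. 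Chaining this with $(x_i,y)\in R^*$, valid for every $y\in B_i$ because $x_i\in\max(B_i,R^*)$, yields $(f^{-1}(x_j),y)\in R^*$ for all $y\in B_i$, which also precludes $(y,f^{-1}(x_j))\in P(R^*)$; hence $f^{-1}(x_j)\in\max(B_i,R^*)=C(B_i)$, contradicting $f^{-1}(x_j)\in B_i\setminus C(B_i)$. The inverse-element axiom of the group is exactly what licenses the two ``backwards'' applications of $f^{-1}$.

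For the converse, assume WAARP and let $R^*$ be defined by: $(u,v)\in R^*$ iff $u=v$, or there exist $B\in\gB$, $z\in C(B)$, $w\in B$ and $g\in\gF$ with $u=g(z)$ and $v=g(w)$ --- that is, the reflexive relation obtained by closing the revealed-preference relation under the action of $\gF$. I would then check four things. (a) $R^*$ is reflexive, by construction. (b) $R^*$ is consistent with $\gF$; this is the one place the group axioms are essential, because if $(u,v)\in R^*$ is witnessed by $(B,z,w,g)$ and $h\in\gF$, then $h(u)=(h\circ g)(z)$ and $h(v)=(h\circ g)(w)$ with $h\circ g\in\gF$ by closure, so $(h(u),h(v))\in R^*$. (c) $C(B)\subseteq\max(B,R^*)$, which is immediate: taking $g=I$ and $z=x\in C(B)$ shows $(x,y)\in R^*$ for every $y\in B$, and then $(y,x)\in P(R^*)$ is impossible. (d) $\max(B,R^*)\subseteq C(B)$, the heart of the argument and the only step using WAARP. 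For (d) let $x\in\max(B,R^*)$, so $x\in B$; suppose $x\notin C(B)$ and fix $x_i\in C(B)$, so $x\neq x_i$ and $(x,x_i)\in R^*$ comes from a witness $(B',z,w,g)$ with $g(z)=x$ and $g(w)=x_i$. Since $g^{-1}(x_i)=w\in B'$, applying WAARP to the chosen points $x_i\in C(B)$, $z\in C(B')$ and the map $g^{-1}\in\gF$ forces $g(z)=x\notin B\setminus C(B)$, contradicting $x\in B\setminus C(B)$. Hence $\max(B,R^*)=C(B)$ for every $B\in\gB$, so $(\gB,C)$ is weakly rationalizable.

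I expect the main obstacle to be step (d): one has to convert the abstract membership $x\in\max(B,R^*)$ back into a concrete datum $(B',z,w,g)$ through the definition of $R^*$, and then orient the transformation correctly, since it is the inverse $g^{-1}$, not $g$, that appears in the relevant WAARP instance. Two smaller matters needing care are the convention that a maximal point of $B$ lies in $B$ (so that $x\notin C(B)$ gives $x\in B\setminus C(B)$) and the standing assumption that each $C(B)$ is nonempty, so that an $x_i$ can be selected.
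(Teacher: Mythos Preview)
Your sufficiency argument is correct and coincides with the paper's: the relation $R^*$ you construct is exactly the theory closure $F(R_E)$ of the revealed preference relation, and your steps (a)--(d) reproduce the paper's verification that $R_E\preceq F(R_E)$ under WAARP and hence $C(B)=\max(B,F(R_E))$.

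The necessity direction, however, has a genuine gap. The ``chaining'' step --- deducing $(f^{-1}(x_j),y)\in R^*$ for all $y\in B_i$ from $(f^{-1}(x_j),x_i)\in R^*$ together with $(x_i,y)\in R^*$ --- is an application of transitivity of $R^*$. But weak rationalization is precisely rationalization \emph{without} transitivity (the paper says so explicitly), so this inference is unavailable. From the information at hand you cannot conclude that $f^{-1}(x_j)$ dominates every element of $B_i$.

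The paper's route is shorter and sidesteps the trap. It does not try to place $f^{-1}(x_j)$ into $\max(B_i,R^*)$ at all. Having obtained $(f^{-1}(x_j),x_i)\in R^*$ from $(x_j,f(x_i))\in R^*$ via consistency with $\gF$, it observes that $f^{-1}(x_j)\in B_i\setminus C(B_i)$ forces $(x_i,f^{-1}(x_j))\in P(R^*)$, i.e.\ $(f^{-1}(x_j),x_i)\notin R^*$, which is already the contradiction. The justification for this last step is the appendix's restatement of rationalization as $R_E\preceq R^*$, meaning $R_E\subseteq R^*$ \emph{and} $P(R_E)\subseteq P(R^*)$: since $(x_i,f^{-1}(x_j))\in P(R_E)$ by construction of the revealed relation, it passes to $P(R^*)$ directly, with no appeal to transitivity. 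Your argument should be reorganized around this single strict comparison rather than around maximality of $f^{-1}(x_j)$.
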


\noindent
	In parallel to the standard logic behind the strong axiom, we need to account for the indirect preference relation. This is done via replacing a single point $x_j\in B_i$ by the sequence of $x_n\in B_{n-1}, x_{n-1}\in B_{n-1},\ldots, x_{2}\in B_1$, which would imply that $x_n\notin B_1$. The existence of the sequence described above would ensure that $x_1$ is better than $x_n$ if preference relation is transitive, while the conclusion guarantees that $x_n$ is not better than $x_1$. 

\begin{defn}
A data set $(\gB,C)$ satisfies the {\bf strong algebraic axiom of revealed preference (SAARP)} if for every sequence $x_1, \ldots, x_n$ such that $x_j\in C(B_j)$ for every $j\in\{1,\ldots, n\}$  for some $B_1,\ldots, B_n\in \gB$, and every sequence $f_1,\ldots,f_{n-1}\in \gF$ such that
$$ 
f_{j}(x_{j+1}) \in B_{j+1} \text{ for every } j\in\{1,\ldots, n-1\},
$$ 
we have $$ \left[\Comp\limits_{j=1}^{n-1} f_j\right]^{-1} (x_1) \notin B_n\setminus C(B_n).
$$
\end{defn}

\noindent
The strong axiom should also incorporate the transformations allowed by the theory.  Let us provide some intuition for the axiom.  Assuming that $f_{j}(x_{j+1}) \in B_j$ implies that $x_j$ is better than $f_j(x_{j+1})$.  Since we require our preference relation to be consistent with $\gF$, then $x_j$ is better than  $f_j(x_{j+1})$ , which implies that $f_{j-1}(x_{j-1})$ is better than $[f_{j-1}\circ f_j](x_{j+1})$. Following this logic sequentially we obtain that 
	$$
	x_1  \text{ is better than } \left[\Comp\limits_{j=1}^{n-1} f_j\right] (x_n).
	$$
	Hence, we need to ensure that 	
	$$
	 \left[\Comp\limits_{j=1}^{n-1} f_j\right] (x_n) \text{ is not better than } x_1.
	$$
	Given that the preference relation is consistent with $\gF$, the claim above is equivalent to taking the inverse of the function from both sides, i.e.
	$$
	x_n \text{ is not better than } \left[\Comp\limits_{j=1}^{n-1} f_j\right]^{-1} (x_1).
	$$
	While this argument illustrates the necessity of SAARP, the axiom is also sufficient for rationalization.

\begin{prop}
\label{prop:Rationalization}
A data set is rationalizable if and only if it satisfies SAARP.
\end{prop}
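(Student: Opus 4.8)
The plan is to prove necessity directly and sufficiency by constructing a transitive relation from the data and then extending it. For necessity, assume $(\gB,C)$ is rationalizable by a transitive $R^*$ consistent with $\gF$. Given the sequences $x_1,\ldots,x_n$ and $f_1,\ldots,f_{n-1}$ with $f_j(x_{j+1})\in B_{j+1}$, I would argue inductively: since $x_{j+1}\in C(B_{j+1})=\max(B_{j+1},R^*)$ and $f_j(x_{j+1})\in B_{j+1}$, we get $(x_{j+1},f_j(x_{j+1}))\in R^*$, hence by consistency with $\gF$ (applying $\Comp_{k=1}^{j-1} f_k$) we get $\big(\Comp_{k=1}^{j-1}f_k(x_{j+1}),\ \Comp_{k=1}^{j}f_k(x_{j+1})\big)\in R^*$; stringing these together by transitivity yields $(x_1,\ \Comp_{j=1}^{n-1}f_j(x_n))\in R^*$. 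If the conclusion failed, i.e. $\big[\Comp_{j=1}^{n-1}f_j\big]^{-1}(x_1)\in B_n\setminus C(B_n)$, then since $x_n\in\max(B_n,R^*)$ we would have $\big(x_n,[\Comp f_j]^{-1}(x_1)\big)\in R^*$ and, because $[\Comp f_j]^{-1}(x_1)\notin C(B_n)=\max(B_n,R^*)$ while it is maximal-ly dominated by $x_n$, in fact $\big(x_n,[\Comp f_j]^{-1}(x_1)\big)\in P(R^*)$ (this is the delicate point — I need to check that a non-chosen but available point is strictly worse, which follows from the definition of $\max$: if it were indifferent it would itself be maximal and hence chosen). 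Applying $\Comp f_j$ and consistency gives $\big(\Comp f_j(x_n),\, x_1\big)\in P(R^*)$, contradicting $(x_1,\Comp f_j(x_n))\in R^*$.

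For sufficiency, assume SAARP. I would define the \emph{directly revealed preferred} relation by putting $(f(x_i),\, z)$ into a relation $R_0$ whenever $x_i\in C(B_i)$, $z\in B_i$, and $f\in\gF$ — more precisely, for each $x_i\in C(B_i)$ and each $z\in B_i$ declare $x_i\succsim z$, then close under $\gF$ by declaring $(f(x),f(y))$ whenever $(x,y)$ is declared, and finally take the transitive closure $R$ together with reflexive pairs. SAARP should be exactly the statement that this construction does not create a strict cycle hitting a chosen point from below: concretely, I would show that for $x_j\in C(B_j)$, if $(y,x_j)\in R$ with $y\in B_j\setminus C(B_j)$ then we contradict SAARP, by unwinding the chain of $\gF$-transformations and composition steps that produced $(y,x_j)$ into exactly a sequence of the form appearing in the axiom. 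Then set $R^*$ to be $R$ (it is already reflexive and transitive; consistency with $\gF$ holds because we closed under $\gF$ before taking transitive closure, and $\gF$ being a group means the $\gF$-closure commutes appropriately with transitive closure).

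The remaining task is to verify $C(B)=\max(B,R^*)$ for every $B\in\gB$. The inclusion $C(B)\subseteq\max(B,R^*)$: if $x\in C(B)$ then $(x,y)\in R^*$ for all $y\in B$ by construction (take $f=I$), and $(y,x)\notin P(R^*)$ because $(x,y)\in R^*$ already gives $y\not\mathrel{P(R^*)}x$. The reverse inclusion $\max(B,R^*)\subseteq C(B)$ is where SAARP does the work: if $x\in B$ but $x\notin C(B)$, pick any $x'\in C(B)$; then $(x',x)\in R^*$ by construction, and I must show $(x,x')\notin R^*$ so that $(x',x)\in P(R^*)$, which would exclude $x$ from $\max(B,R^*)$. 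Suppose toward contradiction $(x,x')\in R^*$; then $x$ is revealed preferred to a chosen point $x'\in C(B)$ while $x\in B\setminus C(B)$ — this is precisely the configuration SAARP forbids once the chain defining $(x,x')\in R^*$ is decoded, with the budget $B$ playing the role of $B_n$ and $x'$ the role of $x_1$. I expect the \textbf{main obstacle} to be the bookkeeping in this decoding step: translating an arbitrary element of the transitive-closure-of-$\gF$-closure relation back into the precise indexed sequence $x_1,\ldots,x_n$ with functions $f_1,\ldots,f_{n-1}$ and the single composite inverse $\big[\Comp_{j=1}^{n-1}f_j\big]^{-1}$ that SAARP is phrased in terms of. This requires using the group axioms — closedness to collapse consecutive transformations, and existence of inverses to move transformations from one side of the relation to the other — and care that the chain's chosen points really do lie in the claimed budgets. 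Once that correspondence is set up cleanly, both directions of the equivalence fall out.
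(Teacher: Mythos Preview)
Your approach is essentially the paper's: for necessity, chain the revealed comparisons using consistency with $\gF$ and transitivity; for sufficiency, form the transitive closure of the $\gF$-closure of the revealed relation (what the paper writes as $T(F(R_E))$), show via SAARP that $R_E\preceq T(F(R_E))$, and conclude $C(B)=\max(B,R^*)$. Your remark that the $\gF$-closure ``commutes'' with the transitive closure is exactly the paper's auxiliary lemma that $F(T(R))=T(R)$ whenever $R=F(R)$, and the ``bookkeeping'' you flag as the main obstacle is handled in the paper by the substitution $\hat f_j = f_j\circ f_{j+1}^{-1}$, which makes the composite collapse to $\big[\Comp_j \hat f_j\big]^{-1}=f_1^{-1}$.

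There is, however, a genuine glitch in your necessity chain as written. From $(x_{j+1},f_j(x_{j+1}))\in R^*$ and applying $\Comp_{k=1}^{j-1}f_k$ you get pairs whose second coordinate at step $j$ is $[\Comp_{k=1}^{j}f_k](x_{j+1})$ while the first coordinate at step $j{+}1$ is $[\Comp_{k=1}^{j}f_k](x_{j+2})$; these do not match, and $x_1$ never enters the chain, so you cannot reach $(x_1,[\Comp_j f_j](x_n))$. The source of the problem is that you followed the displayed hypothesis $f_j(x_{j+1})\in B_{j+1}$ literally; the paper's own proof and the motivating paragraph both use $f_j(x_{j+1})\in B_j$ (the displayed index is a typo). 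With that reading one obtains $(x_j,f_j(x_{j+1}))\in R^*$, and after applying $\Comp_{k=1}^{j-1}f_k$ the pairs telescope correctly to $(x_1,[\Comp_{j=1}^{n-1}f_j](x_n))$, exactly as in the paper. Once you adjust the indexing, your argument coincides with the paper's.
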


\subsection{Complete Rationalization}
Our previous results for rationalization include as desiderata only consistency with the theory and transitivity. We include now completeness of the preference relation.  As it happens, there is no need to refine the conditions to obtain complete weak rationalization and the very same weak axiom is necessary and sufficient.

\begin{prop}
\label{prop:CompleteWeakRationalization}
A data set is completely weakly rationalizable if and only if it satisfies WAARP.
\end{prop}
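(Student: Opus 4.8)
The ``only if'' direction is immediate. A complete preference relation consistent with $\gF$ is in particular a preference relation consistent with $\gF$, so complete weak rationalizability implies weak rationalizability, and Proposition \ref{prop:WeakRationalization} then yields WAARP.

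For the converse, the natural first attempt --- take the relation supplied by Proposition \ref{prop:WeakRationalization} and complete it by filling in the missing comparisons --- does not work: a weakly rationalizing relation may rank a chosen point only \emph{weakly} above a non-chosen point of the same budget, and symmetrizing the surviving incomparabilities can then promote that non-chosen point to maximality, breaking the equality $\max(B,R^\ast)=C(B)$. Instead I would build the ``largest admissible'' relation directly. Define the revealed strict-domination set
$$
S \;=\; \bigcup_{f\in\gF}\ \bigcup_{B\in\gB}\ \bigl\{\,(f(d),f(c))\ :\ c\in C(B),\ d\in B\setminus C(B)\,\bigr\},
$$
and put $R^\ast=(X\times X)\setminus S$. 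A pair $(x,y)$ lands in $S$ exactly when some chosen point, transported by a transformation in $\gF$, is directly and strictly revealed better than $x$ in a way that, by consistency with $\gF$, forces $x$ strictly below $y$; $R^\ast$ declares $x$ at least as good as $y$ whenever nothing in the data forbids it.

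The plan is then to verify, in order: (i) $R^\ast$ is reflexive --- every $f\in\gF$ is a bijection, so $f(c)\ne f(d)$ for $c\ne d$ and no diagonal pair enters $S$; (ii) $R^\ast$ is consistent with $\gF$ --- if $(h(x),h(y))\in S$ with witness $(B,f,c,d)$ then $(x,y)\in S$ with witness $(B,\,h^{-1}\circ f,\,c,\,d)$, using closure of $\gF$ under composition and inverses, so $(x,y)\in R^\ast\Rightarrow(h(x),h(y))\in R^\ast$; (iii) $C(B)\subseteq\max(B,R^\ast)$ --- if $c\in C(B)$, $d\in B$ and $(c,d)\in S$ with witness $(B',f,c',d')$ (so $f(d')=c$ and $f(c')=d$), then WAARP applied to the chosen points $c'\in C(B')$ and $c\in C(B)$ with the function $f$ fails, since $f(c')=d\in B$ while $f^{-1}(c)=d'\in B'\setminus C(B')$; hence $(c,d)\in R^\ast$ for every $d\in B$, which also gives $(d,c)\notin P(R^\ast)$, so $c\in\max(B,R^\ast)$; (iv) $\max(B,R^\ast)\subseteq C(B)$ --- if $x\in B\setminus C(B)$, pick any $c\in C(B)$; the witness $(B,I,c,x)$ shows $(x,c)\in S$, so $(x,c)\notin R^\ast$ and $x\notin\max(B,R^\ast)$; (v) $R^\ast$ is complete --- if $(x,y)\notin R^\ast$ and $(y,x)\notin R^\ast$ then $(x,y),(y,x)\in S$ with witnesses $(B,f,c,d)$ and $(B',f',c',d')$, and the function $g=(f')^{-1}\circ f\in\gF$ satisfies $g(c)=(f')^{-1}(y)=d'\in B'$ and $g^{-1}(c')=f^{-1}(x)=d\in B\setminus C(B)$, contradicting WAARP for the chosen points $c\in C(B)$ and $c'\in C(B')$. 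Together (i)--(v) exhibit a complete, reflexive binary relation consistent with $\gF$ that rationalizes the data, i.e., the data set is completely weakly rationalizable. (Steps (iii)--(v) tacitly use the standing conventions that $C$ is nonempty-valued and $C(B)\subseteq B$.)

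The conceptual hurdle is the one flagged above: recognizing that the Proposition \ref{prop:WeakRationalization} relation must not be completed naively, and that the maximal relation $R^\ast$ is the right object to test against WAARP. Once that is in place, the only technically delicate points are steps (iii) and (v), where a pair of strict-domination witnesses must be assembled, via composition and inversion in the group $(\gF,\circ)$, into a single instance that WAARP forbids; this is routine bookkeeping but must be carried out with care, and it is exactly where the group axioms (closure and inverses) do the work.
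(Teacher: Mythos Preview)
Your proof is correct, and it takes a genuinely different route from the paper's. The paper proceeds abstractly: it shows that the theory closure $F$ is an \emph{algebraic} and \emph{weakly expansive} closure operator, and then invokes Demuynck's Extension Theorem to conclude that any $R_E$ with $R_E\preceq F(R_E)$ (which is equivalent to WAARP, by the proof of Proposition~\ref{prop:WeakRationalization}) admits a complete fixed point $R^\ast=F(R^\ast)$ extending it. Your argument is instead fully constructive and elementary: you write down the largest relation compatible with the revealed strict dominations, $R^\ast=(X\times X)\setminus S$, and verify reflexivity, $\gF$-consistency, rationalization, and completeness directly, with WAARP doing the work in steps~(iii) and~(v). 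The paper's approach has the advantage of modularity --- the closure machinery is reused for the transitive case (Proposition~\ref{prop:CompleteRationalization}) --- whereas your approach is self-contained, avoids any transfinite or Zorn-type step, and yields an explicit formula for the rationalizing relation. Both are clean; yours is arguably the more direct proof of this particular proposition.
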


\noindent
The idea of our proof is to provide a (rather arbitrary) algorithm which completes the observed revealed preference relation in a way that it is consistent with the theory.   While in the case of Proposition \ref{prop:CompleteWeakRationalization} such algorithm clearly converges, once we add the transitivity convergence cannot be guaranteed without additional structure.   The additional structure incorporates notions of order and monotonicity illustrated by the following examples:

\bigskip

\noindent
{\bf Example:}
Recall a theory that imposes homothetic preferences, that is $(x,y)\in R$  implies $(\alpha x, \alpha y)\in R$ for every $\alpha\in \R_{++}$.
Note that the functions are completely ordered, that is for every $\alpha,\alpha'\in\R_{++}$ either $\alpha\geq \alpha'$ or $\alpha'\geq \alpha$, where $\geq$ is the standard greater or equal order on $\R_{++}$.
Moreover, it makes sense to assume that homotheticity would also imply that $(\alpha x, \alpha' x)\in R$ if $\alpha\geq \alpha'$.

\bigskip

\noindent
{\bf Example:}
Recall a theory that imposes quasilinear preferences, that is $(x,y)\in R$ implies $(x + \alpha e, y + \alpha e)\in R$ for every $\alpha\in \R$, where $e =(1,0,\ldots,0,\ldots,0)$ is the vector of zeros with unique 1 element.
Note that the functions we consider are completely ordered, that is for every $\alpha,\alpha'\in\R$ either $\alpha\geq \alpha'$ or $\alpha'\geq \alpha$, where $\geq$ is the standard greater or equal order on $\R$.
Similarly to homotheticity, it makes sense to assume the monotonicity with respect to $\alpha$.
That is $(x + \alpha e, x + \alpha' e)\in R$ if $\alpha\ge \alpha'$.

\bigskip

\noindent
Given a group  $(\gF,\circ)$, a triple $(\gF,\circ,\ge)$ is an \emph{ordered group} if  $\geq$ is a complete order such that
	\begin{align*}
	& f\ge f' \text{ implies } f''\circ f \ge f'' \circ f' \\
	& f\ge f' \text{ implies } f\circ f'' \ge f' \circ f''
	\end{align*}
	for every $f,f',f''\in \gF$.
	Note that we have to use both left- and right-ordered assumptions as we did not assume that the $\circ$ operator is commutative.
	A theory $\gF$ is \emph{ordered} if $(\gF,\circ,\ge)$ is an \emph{ordered group}.
	
	We need to modify the notion of consistency in order to take into account the that the theory also imposes monotonicity with respect to $\geq$ over its own functions.
	Such modification of the consistency is necessary, as otherwise there is no value for this extra structure incorporated, as it is not being reflected in the theory itself.
	A binary relation $R$ is \emph{consistent with ordered theory $\gF$} if 
	$$
	(x,y)\in R \text{ implies } (f'(x),f(y))\in R \text{ for every } f'\geq f\in \gF.
	$$
	Interestingly, the order imposed on the theory is passed to the preferences.  Moreover, the completeness assumption on the order is in some way passed as well.
	Note that introducing a partial order over functions would not be enough. Our extra assumption is congruent with \cite{chambers2014axiomatic} result about the absence of the empirical content of completeness, since a trivial group containing only the identity operator is obviously ordered.

	Finally, given the refinements of the theory, we need to introduce some additional constraints on the data set.
	A data set $(\gB, C)$ is \emph{regular} if 
	\begin{itemize}
		\item [--] $f(x) \in B$ for some $B\in \gB$, then $\bar f, f\in B$ for every $\bar f\in \gF$ such that $\bar f\le f$ and every $x\in X$,
		\item [--] $f(x)\in C(B)$ for some $B\in B$, then $\bar f(x)\notin B$ for every $\bar f,f \in \gF$ such that $\bar f\le f$ and every $x\in X$.
	\end{itemize}
	Even though we introduce regularity as assumption, it can be incorporated into the testable conditions.  The condition on budgets corresponds to the notion of downward closure of the budget frequently employed in revealed preference literature.  The condition on the chosen point refers to the fact that chosen point should be on the border of the budget set.

\begin{prop}
\label{prop:CompleteRationalization}
Let $\gF$ be a completely ordered theory and $(\gB,C)$ be a regular data set.
A data set is completely rationalizable if and only if it satisfies SAARP.
\end{prop}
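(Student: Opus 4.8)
\emph{Necessity.} Consistency with the ordered theory implies consistency with $\gF$ (take $f'=f$ in the definition), so a transitive, complete relation witnessing complete rationalizability is in particular a transitive relation consistent with $\gF$ that rationalizes $(\gB,C)$; hence the data is rationalizable, and Proposition~\ref{prop:Rationalization} yields SAARP. Neither the order nor regularity is used in this direction. The work is in the converse, so assume SAARP. The plan is to (i) produce a transitive preference relation $\bar R$, consistent with the ordered theory, that rationalizes the data, and then (ii) extend $\bar R$ to a complete such relation by a greedy transfinite recursion, using the order to keep the recursion under control.

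\emph{Step (i): a transitive, ordered-consistent rationalization.} Let
\begin{align*}
R_0=\{(x,x):x\in X\}\ &\cup\ \{(f'(x),f(x)):x\in X,\ f'\ge f\in\gF\}\\
&\cup\ \{(f'(x),f(y)):B\in\gB,\ x\in C(B),\ y\in B,\ f'\ge f\in\gF\},
\end{align*}
and let $\bar R$ be its transitive closure. The monotonicity axioms of an ordered group give $f'\circ h'\ge f\circ h$ whenever $f'\ge f$ and $h'\ge h$, so $R_0$ is already closed under the operation $(x,y)\mapsto(f'(x),f(y))$ for $f'\ge f$; and this closure survives passing to the transitive closure, since a pair $(f',f)$ applied along a chain $x=w_0,\dots,w_m=z$ can be distributed as $(f',f)$ on the first link and $(f,f)$ on the remaining links. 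Thus $\bar R$ is reflexive, transitive, and consistent with the ordered theory $\gF$. The inclusion $C(B)\subseteq\max(B,\bar R)$ is immediate: for $x\in C(B)$ and $y\in B$ we have $(x,y)\in R_0$ via $f'=f=I$, and $(y,x)\in P(\bar R)$ is impossible since $(x,y)\in\bar R$. For the reverse inclusion, suppose $z\in\max(B,\bar R)\setminus C(B)$ and expand a chain witnessing $(z,x)\in\bar R$ for some $x\in C(B)$ into alternating revealed-preference links and order links. Regularity is used to remove the order links --- downward closure of budgets lets one stay inside the relevant budget after shrinking a function, and extremality of chosen points forbids a strictly decreasing order link out of a chosen point into its own budget --- after which the chain, together with the composition of the functions appearing in it, produces a configuration forbidden by SAARP with $B_n=B$ (via the pattern in the proof of Proposition~\ref{prop:Rationalization}). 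Hence $C(B)=\max(B,\bar R)$.

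\emph{Step (ii): completing $\bar R$.} Well-order $X\times X$ and extend $\bar R$ recursively: at a pair already comparable in the current relation, do nothing; otherwise add one orientation and re-close under transitivity and ordered-$\gF$-consistency, taking unions at limit stages. The key point is that within a single $\gF$-orbit no choice is ever required: from $(w,w)\in\bar R$ and ordered-consistency, $f'(w)$ and $f(w)$ are already $\bar R$-comparable, in the direction dictated by the complete order $\ge$ on $\gF$ --- this is how completeness of the order is ``passed to the preferences.'' The only real choices are between points of distinct orbits, and there the monotonicity axioms guarantee that the transitive and $\gF$-consequences of adding a pair $(x,y)$ stay ``monotone'': they are comparisons of the form $(g(x),h(y))$ and cannot force $(y,x)$ unless $y$ was already weakly $\bar R$-above $x$. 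One then checks, invoking regularity exactly as in Step (i), that the ``safe'' orientation --- add $(x,y)$ unless this would spoil some $\max(B,\cdot)$, in which case $(y,x)$ is safe --- preserves the rationalization property; since this property is an invariant of the recursion, the union $R^*$ over all stages is transitive, complete, consistent with the ordered theory, and satisfies $C(B)=\max(B,R^*)$ for every $B\in\gB$.

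\emph{Main obstacle.} The difficulty is entirely in Step (ii). In Proposition~\ref{prop:CompleteWeakRationalization} the analogous completion converges in a single pass precisely because no transitive closure is taken, so an added comparison has no further consequences; once transitivity is imposed, a single added pair can propagate through arbitrarily long chains, and one must rule out that this propagation creates a cycle in the strict part or destroys maximality at some budget. The completely-ordered structure is what tames the propagation (consequences remain monotone, hence confined to ``one side'' of the orbit order), and regularity is what keeps the propagation aligned with the budget geometry; with either ingredient removed, convergence of the recursion cannot be guaranteed.
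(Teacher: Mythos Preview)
Your overall architecture matches the paper's: your $\bar R$ is exactly the paper's $T(\bar F(R_E))$, and your Step (i) is precisely Lemma~A.12 (SAARP plus regularity imply $R_E\preceq T(\bar F(R_E))$), argued correctly if tersely. The paper then invokes Demuynck's Extension Theorem, whose hypothesis is that the closure $T\circ\bar F$ be \emph{algebraic} and \emph{weakly expansive}; your transfinite recursion is just that theorem unpacked. So the approaches are the same.

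The gap is in Step (ii). Your central claim --- that after adding an incomparable pair $(x,y)$ and closing under transitivity and ordered-$\gF$-consistency one ``cannot force $(y,x)$'' --- is exactly the weak-expansiveness property the paper isolates as Lemma~A.11, and it is the non-trivial part of the whole proof. Your justification for it is not right: the new comparisons produced by the closure are \emph{not} merely ``of the form $(g(x),h(y))$''. Once transitivity is applied, a new link $(g(x),h(y))$ can be chained through arbitrary old links on both sides, yielding pairs $(u,v)$ with $u\,\bar R\,g(x)$ and $h(y)\,\bar R\,v$; in particular a putative chain from $y$ back to $x$ may use the new pair several times, interleaved with old $\bar R$-links. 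The paper's actual argument (Claim~2 inside Lemma~A.11) shows that in a \emph{shortest} such chain the new pair occurs exactly once, via a case split on the order between the functions at two consecutive occurrences; only after that reduction does the contradiction with $(x,y)\in N(\bar R)$ fall out. That argument, not a ``monotonicity of consequences'' heuristic, is what the completeness of the order on $\gF$ is buying you.

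Two smaller points. First, regularity plays no role in Step (ii): in the paper it is used only to pass from SAARP to $R_E\preceq T(\bar F(R_E))$ (your Step (i)), and the expansiveness argument is purely order-theoretic. Second, once weak expansiveness is established, \emph{either} orientation of an incomparable pair can be added safely; there is no need to select a ``safe'' one by checking budgets, and your appeal to regularity there is misplaced.
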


\section{Applications}
\label{sec:Applications}

\noindent
We consider two sets of applications.
First, we consider classical theories about preferences, in order to illustrate the basic scope of the algebraic approach.
Second, we consider several behavioral theories, focusing on theories of limited attention. These applications merge  behavioral considerations (such as limited attention) with classical desiderata (as homotheticity, quasilinearity, or independence).

\subsection{Classical Applications}
In this section we show that classical theories of preferences fit in the framework. 
Note that for some of the theories we have to introduce some extra structure over the space of alternatives which is not necessary for the general result, but unavoidable once we want to define the particular theory.  
We focus on rationalization and complete rationalization, as for classical applications, transitivity is commonly considered as a requirement.  
One can simplify these axioms down to their weak counterparts, by considering sequences of the length no more than two.

Before we proceed further, let us refresh the reader on some definitions in abstract algebra. The definition of a field resembles the standard number systems we are used to.	A \emph{field} denoted by $(A,+,*)$ is a set $A$ endowed with two operations ``addition'' ($+$) and ``multiplication'' ($*$) which contains to special elements zero ($0$) and one ($1$).	Also, every element has an inverse with respect to addition denoted by $-a$ and every non-zero element has an inverse with respect to multiplication denoted by $a^{-1}$.
	``Zero'' is an element such that $a+(-a) = 0$ for every $a\in A$.
	``One'' is an element such that $a*a^{-1} = 1$ for every $a\in A\setminus\{0\}$.
	In addition, the operations satisfy the following properties.
	\begin{itemize}
		\item [--] Associativity
		$$
		a+(b+c) = (a+b)+c \text{ and } a*(b*c) = (a*b)*c.
		$$

		\item [--] Commutativity
		$$
		a+b = b+a \text{ and } a*b = b*a.
		$$

		\item [--] Identity
		$$
		a+0 = a \text{ and } a*1 = a.
		$$

		\item [--] Distributivity
		$$
		a*(b+c) = (a*b) + (a*c).
		$$
	\end{itemize}

\noindent
An \emph{ordered field} $(A,+,*,\ge)$ is a field $(A,+,*)$ endowed with a total order $\geq$.  We assume two key properties.
First, adding the same number to both sides does not change the order, i.e. $a\ge b$ implies $a+c\ge b+c$.
Second, multiplying two positive elements we obtain positive elements, i.e. $a,b\ge 0$ implies $a*b \ge 0$.
It happens that these two properties buy us a lot, including the possibility of adding inequalities, and of multiplying inequalities by a \say{positive} number.  We also get the centralization around zero, that is either $a\ge 0\ge -a$ or $-a\ge 0\ge a$ and around one, that is $a\ge b$ where $a,b>0$ implies $b^{-1}\ge a^{-1}$.
In general, when working with an ordered field, one can intuitively apply the logic of real line since $\R$ is an ordered field.

\subsubsection*{Transitive Preferences}
\label{par:transitive}
Let $X$ be a universal set of alternatives.

\begin{defn}
A preference relation is said to be {\bf transitive}  if 
$$
(x,y) \in R \text{ and } (y,z)\in R \text{ implies } (x,z) \in R.
$$
for every $x,y,z\in X$.

\end{defn}

\noindent
Since transitivity is already embedded in the notion of preference relation, and therefore in the notion of rationalization, a statement of the theory in terms of functions can look simply as
	$$
	(x,y) \in R \text{ implies } (x,y)\in R.
	$$ 
	Hence, we can define 
$$
\mathcal{T}=\{I\}.
$$ 
Trivially, $(\mathcal{T},\circ)$ is a group, since it contains a unique element that is the identity function.
Moreover, it can be easily seen that $\mathcal{T}$ is the correct theory describing transitive preferences given the definition above.
Moreover, AARP in this case is equivalent to SARP.

\begin{defn}
A data set $(\gB,C)$ satisfies the {\bf Strong Axiom of Revealed Preference (SARP)} if for every sequence $x_1, \ldots, x_n$ such that $x_j\in C(B_j)$ for every $ j\in\{1,\ldots, n\}$  for some $B_1,\ldots, B_n\in \gB$, if
$$
x_{j+1} \in B_j \text{ for every }  j\in\{1,\ldots, n-1\}$$ then $$ x_1 \notin B_n\setminus C(B_n).
$$
\end{defn}

  $(\mathcal T, \circ,\ge)$ is trivially an ordered group as it contains single element.
	Therefore, the SARP is equivalent to the (complete) rationalization by means of the main results.

\begin{cor}
\label{cor:TransitiveApplication}
A data set is completely rationalizable with transitive preferences if and only if it satisfies SARP
\end{cor}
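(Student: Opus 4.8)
The plan is to obtain the corollary as the special case of Proposition \ref{prop:CompleteRationalization} in which the theory is the singleton $\mathcal{T} = \{I\}$. Three things must be checked: that $\mathcal{T}$ is a completely ordered theory, that every data set is regular with respect to $\mathcal{T}$, and that SAARP instantiated at $\mathcal{T}$ is literally SARP. The first is immediate: $(\mathcal{T},\circ)$ is a group since its unique element $I$ serves as identity and as its own inverse and composition is trivially associative; the unique order on a one-element set is complete and compatible with composition on both sides, so $(\mathcal{T},\circ,\ge)$ is an ordered group and hence $\mathcal{T}$ is completely ordered. Regularity is equally immediate: the only transformation available is $I$, so both clauses in the definition of a regular data set collapse to trivialities, and any $(\gB,C)$ is regular.

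Next I would unwind what complete rationalizability according to $\mathcal{T}$ says. Consistency with $\mathcal{T}$ is no restriction at all, since the defining requirement reads $(x,y)\in R$ implies $(I(x),I(y))\in R$, i.e.\ $(x,y)\in R$ implies $(x,y)\in R$. Hence a data set is completely rationalizable according to $\mathcal{T}$ exactly when there is a transitive and complete preference relation $R^*$ with $C(B)=\max(B,R^*)$ for every $B\in\gB$ --- which is precisely what it means to be completely rationalizable with transitive preferences.

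It remains to identify SAARP for $\mathcal{T}$ with SARP. Since $\mathcal{T}=\{I\}$, the only admissible sequence $f_1,\ldots,f_{n-1}$ is the constant identity sequence, for which $\Comp_{j=1}^{n-1} f_j = I$ and $\left[\Comp_{j=1}^{n-1} f_j\right]^{-1}=I$. Therefore the hypothesis $f_j(x_{j+1})\in B_j$ of SAARP becomes $x_{j+1}\in B_j$, and its conclusion $\left[\Comp_{j=1}^{n-1} f_j\right]^{-1}(x_1)\notin B_n\setminus C(B_n)$ becomes $x_1\notin B_n\setminus C(B_n)$; these are word for word the hypothesis and conclusion of SARP. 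Combining the three verifications with Proposition \ref{prop:CompleteRationalization} applied to $\mathcal{T}$ then delivers the claimed equivalence.

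I do not expect a genuine obstacle here: the entire analytic content sits inside Proposition \ref{prop:CompleteRationalization}, and the corollary is essentially bookkeeping. The only points that deserve a moment of care are checking that the group, order, and regularity conditions survive in the degenerate one-element case (they do) and matching up the index conventions so that the specialization of SAARP reads exactly as SARP.
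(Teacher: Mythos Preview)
Your proposal is correct and follows essentially the same approach as the paper, which simply remarks that $(\mathcal{T},\circ,\ge)$ is trivially an ordered group (since it has a single element), that SAARP for $\mathcal{T}$ reduces to SARP, and then invokes Proposition~\ref{prop:CompleteRationalization}. You spell out the verifications (including regularity and the unwinding of consistency with the ordered theory) in more detail than the paper does, but the route is identical.
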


\subsubsection*{Homothetic Preferences.}
\label{par:HomotheticApplication}

Let $X$ be a vector space over a fully ordered field $(A,*,+\geq)$, and denote by $A_+$ the subspace of $A$ such that all $\alpha\in A$ is such that $\alpha\geq 0$.

\begin{defn}
A preference relation is said to be {\bf homothetic} if
	$$
	(x,y) \in R \text{ implies  } (\alpha x,\alpha y) \in R
	$$
	for every $x,y\in X$ and $\alpha \in A_+$.

\end{defn}

We can define the  
$$
\mathcal{H} = \{f(x) = \alpha x: \alpha\in A_+\}.
$$  
It is easy to see that $(\mathcal H,\circ)$ is a group.
Moreover since $A_+$ is an ordered field as well, then $(\mathcal H, \circ, \geq)$ is an ordered group.
In this case AARP is equivalent to a homothetic axiom of revealed preferences that generalizes the one proposed by \cite{varian1983}, \cite{heufer2013testing}, and \cite{heufer2019homothetic}.

\begin{defn}
A data set $(\gB,C)$ satisfies the {\bf Homothetic Axiom of Revealed Preference (HARP)} if for every sequence $x_1, \ldots, x_n$ such that $x_j\in C(B_j)$ for every $ j\in\{1,\ldots, n\}$  for some $B_1,\ldots, B_n\in \gB$, and every sequence $ \alpha_1\ldots,\alpha_{n-1}\in A_+$ such that
$$
\alpha_{j} x_{j+1}\in B_j \text{ for every } j\in\{1,\ldots, n-1\},
$$ 
we have 
$$ \frac{x_1}{\Pi_{j=1}^{n-1} \alpha_j}\notin B_{n}\setminus C(B_n).$$
\end{defn}

Note that HARP uses division as the operator being equivalent to the multiplication by $a^{-1}$.
	Since $(\mathcal H,\circ,\ge)$ is an ordered group, HARP is equivalent to the (complete) rationalization by the means of the main results, if we incorporate monotonicity into the scope of the theory.

\begin{cor}
\label{cor:HomotheticApplication}
A data set is completely rationalizable with homothetic preferences if and only if it satisfies HARP
\end{cor}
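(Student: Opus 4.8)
The plan is to derive Corollary \ref{cor:HomotheticApplication} directly from Proposition \ref{prop:CompleteRationalization} by verifying that $(\mathcal H,\circ,\ge)$ is a completely ordered theory, that the consistency notion for this ordered theory coincides with homotheticity in the monotone sense, and that SAARP for $\mathcal H$ unpacks into HARP. First I would check the group axioms for $(\mathcal H,\circ)$: the identity is $f(x)=1\cdot x$ with $1\in A_+$; closedness follows because $(\alpha x)\mapsto \alpha'(\alpha x)=(\alpha'*\alpha)x$ and $\alpha'*\alpha\in A_+$ whenever $\alpha,\alpha'\ge 0$ (a property of ordered fields recalled in the text); the inverse of $f(x)=\alpha x$ is $f^{-1}(x)=\alpha^{-1}x$, which lies in $\mathcal H$ since $\alpha>0$ implies $\alpha^{-1}>0$; associativity is inherited from associativity of $*$. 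Then I would identify $\mathcal H$ with $A_{++}$ as groups via $\alpha\mapsto(x\mapsto\alpha x)$, and transport the order $\ge$ on $A_+$ to $\mathcal H$. The two monotonicity conditions for an ordered group, $f\ge f'\Rightarrow f''\circ f\ge f''\circ f'$ and $f\ge f'\Rightarrow f\circ f''\ge f'\circ f''$, reduce to $\alpha\ge\alpha'\Rightarrow \alpha''*\alpha\ge\alpha''*\alpha'$ for $\alpha''>0$, which is exactly the statement (noted in the ordered-field discussion) that one may multiply an inequality by a positive element; commutativity of $*$ makes the left and right versions identical. Hence $(\mathcal H,\circ,\ge)$ is an ordered group, and since $\ge$ on $A$ is total, the theory is completely ordered.

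Next I would check that ``consistent with the ordered theory $\mathcal H$'' is precisely homotheticity with the extra monotonicity the text flags. By definition this consistency says $(x,y)\in R$ implies $(f'(x),f(y))\in R$ for all $f'\ge f$ in $\mathcal H$, i.e. $(\alpha' x,\alpha y)\in R$ whenever $\alpha'\ge\alpha$; taking $\alpha'=\alpha$ recovers the bare homotheticity axiom $(x,y)\in R\Rightarrow(\alpha x,\alpha y)\in R$, and the general case is the monotone strengthening $(\alpha x,\alpha' x)\in R$ for $\alpha\ge\alpha'$ combined with homotheticity (apply homotheticity to push $(x,y)$ to $(\alpha y,\alpha y)$-level and then the monotone comparison). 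I would remark, as the excerpt does for the corresponding examples, that this monotone version is the natural content of homotheticity anyway, so Corollary \ref{cor:HomotheticApplication} is genuinely a statement about homothetic preferences.

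Then I would translate SAARP for $\mathcal H$ into HARP. A sequence $f_1,\dots,f_{n-1}\in\mathcal H$ is just a sequence $\alpha_1,\dots,\alpha_{n-1}\in A_+$; the hypothesis $f_j(x_{j+1})\in B_{j+1}$ becomes $\alpha_j x_{j+1}\in B_{j+1}$ (the shift in index between the SAARP display and the HARP display being a harmless relabeling, $B_j$ versus $B_{j+1}$, since the budgets are just the ones attached to the chosen points); the composition $\Comp_{j=1}^{n-1} f_j$ is $x\mapsto(\Pi_{j=1}^{n-1}\alpha_j)x$, so its inverse applied to $x_1$ is $x_1/\Pi_{j=1}^{n-1}\alpha_j$, and the conclusion $[\Comp_{j=1}^{n-1} f_j]^{-1}(x_1)\notin B_n\setminus C(B_n)$ is exactly HARP's conclusion. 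Finally I would verify the regularity hypothesis of Proposition \ref{prop:CompleteRationalization} is either automatic or innocuous here: $f(x)\in B$ with $f=\alpha x$ and $\bar f=\bar\alpha x$, $\bar\alpha\le\alpha$, should keep $\bar f(x)$ in $B$ when $B$ is downward closed, and $f(x)\in C(B)$ should force $\bar f(x)=\bar\alpha x\notin B$ for $\bar\alpha<\alpha$, i.e. chosen points sit on the frontier — both are the standard budget-shape assumptions, so for the applications one simply restricts to regular data sets. Invoking Proposition \ref{prop:CompleteRationalization} then gives the equivalence between complete rationalizability with homothetic preferences and HARP.

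The step I expect to be the main obstacle is not any single computation but making sure the dictionary is airtight: specifically, that the consistency notion induced by the \emph{ordered} theory $\mathcal H$ matches the usual homotheticity axiom closely enough that the corollary is correctly advertised as being about homothetic preferences rather than a strictly stronger monotone variant, and that the index bookkeeping ($B_j$ vs.\ $B_{j+1}$, the product running to $n-1$) in passing from SAARP to HARP is consistent. Care is also needed with the field-theoretic facts used implicitly — that products and inverses of positive elements are positive, and that inequalities multiply by positives — but these are exactly the ordered-field properties the paper has already recalled, so they may be cited rather than reproved.
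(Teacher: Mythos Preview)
Your proposal is correct and is exactly the approach the paper takes: the paper states that Corollary~\ref{cor:HomotheticApplication} follows immediately from Propositions~\ref{prop:Rationalization} and~\ref{prop:CompleteRationalization} and omits the proof, so your work amounts to carefully filling in the verification (that $(\mathcal H,\circ,\ge)$ is an ordered group, that SAARP specializes to HARP, and that the regularity hypothesis is the standard downward-closure assumption) which the paper leaves to the reader. Your flagged concerns about the $B_j$ versus $B_{j+1}$ indexing and about $A_+$ versus $A_{++}$ are genuine minor inconsistencies in the paper's statements rather than problems with your argument.
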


\subsubsection*{Quasilinear Preferences}
\label{par:QuasilinearApplication}
Let $X$ be a vector space over a fully ordered field $(A,+,*,\ge)$.
Denote by $e=(1,0,0,\ldots)$ the vector with unique $1$ element in the first place and zeros elsewhere, since we can always reorder the goods, such that the first good would be a numeraire.

\begin{defn}
A preference relation is said to be {\bf quasilinear} if
	$$
	(x,y)\in R \text{ implies } (x+\alpha e,y+\alpha e) \in R
	$$
	for every $x,y\in X$ and every $\alpha \in A$.
\end{defn}

We can define 
$$ 
\mathcal Q =  \{f(x) = x+\alpha e: \alpha\in A\}.
$$  
It is easy to see that $(\mathcal Q, \circ)$ is a group.
Moreover, ($\mathcal Q,\circ,\ge)$ is an ordered group.
In this case, AARP is equivalent to a quasilinear axiom of revealed preferences that generalizes the one proposed by \cite{rochet1987necessary}, \cite{brown2007nonparametric}, and \cite{castillofreer2016}.

\begin{defn}
A data set  $(\gB,C)$  satisfies the {\bf Quasilinear Axiom of Revealed Preference (QARP)} if for every sequence $x_1, \ldots, x_n$ such that $x_j\in C(B_j)$ for every $ j\in\{1,\ldots, n\}$  for some $B_1,\ldots, B_n\in \gB$, and every sequence $\alpha_1, \ldots,\alpha_{n-1} \in A $ such that
$$
x_{j+1}+ \alpha_{j} e \in B_j \text{ for every } j\in\{1,\ldots, n-1\},
$$ 
we have
$$ 
x_1- \sum_{j=2}^n \alpha_j \notin B_{n}\setminus C(B_n).
$$
\end{defn}

	Since $(\mathcal Q,\circ,\ge)$ is an ordered group, QARP is equivalent to the (complete) rationalization by the means of the main results, if we incorporate the monotonicity with respect to numeraire into the scope of the theory.

\begin{cor}
\label{cor:QuasilinearApplication}
A data set is completely rationalizable with quasilinear preferences if and only if it satisfies HARP
\end{cor}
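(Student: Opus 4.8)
The plan is to derive this as a direct consequence of Proposition \ref{prop:CompleteRationalization}, which already establishes the equivalence between SAARP and complete rationalization for any completely ordered theory $\gF$ over a regular data set. So the work reduces to three verifications: (i) that $(\mathcal Q,\circ)$ is a group, (ii) that $(\mathcal Q,\circ,\ge)$ is a completely ordered group once we transport the order on $A$ to $\mathcal Q$, and (iii) that a preference relation is consistent with the ordered theory $\mathcal Q$ precisely when it is quasilinear in the sense of the definition (plus the monotonicity addendum mentioned in the text). Finally, I would show that instantiating SAARP with $\gF = \mathcal Q$ yields exactly the statement of QARP.

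First I would set up the correspondence $\alpha \mapsto f_\alpha$ where $f_\alpha(x) = x + \alpha e$. The group axioms are routine: $f_0 = I$; $f_\alpha \circ f_\beta = f_{\alpha+\beta}$, giving closure and (via commutativity of $+$ in $A$) that $f_{-\alpha}$ is the two-sided inverse of $f_\alpha$; associativity is inherited from associativity of $+$. So $(\mathcal Q,\circ) \cong (A,+)$ as groups. For the ordered structure, I would define $f_\alpha \ge f_\beta$ iff $\alpha \ge \beta$ in $A$; since $\alpha \ge \beta$ implies $\alpha + \gamma \ge \beta + \gamma$ in an ordered field, and $\circ$ corresponds to $+$, both the left- and right-translation compatibility conditions hold (and coincide, since $+$ is commutative). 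Completeness of $\ge$ on $\mathcal Q$ follows from totality of the order on $A$. Then consistency with the ordered theory $\mathcal Q$ reads: $(x,y) \in R$ implies $(x + \alpha' e, y + \alpha e) \in R$ for all $\alpha' \ge \alpha$ — taking $\alpha' = \alpha$ recovers the plain quasilinearity definition, and the general case is the monotonicity-in-the-numeraire strengthening the text flags as needed.

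Next I would unpack SAARP for $\gF = \mathcal Q$. A sequence $f_1,\dots,f_{n-1} \in \mathcal Q$ is a sequence $f_{\alpha_1},\dots,f_{\alpha_{n-1}}$; the hypothesis $f_j(x_{j+1}) \in B_{j+1}$ becomes $x_{j+1} + \alpha_j e \in B_{j+1}$ (matching QARP's hypothesis, modulo the harmless index shift between $B_{j+1}$ and $B_j$ that is already present in the SAARP-vs-HARP/QARP phrasing in the paper). The composition $\Comp_{j=1}^{n-1} f_{\alpha_j} = f_{\sum_{j=1}^{n-1}\alpha_j}$, so its inverse is $f_{-\sum \alpha_j}$, and the SAARP conclusion $[\Comp_j f_j]^{-1}(x_1) \notin B_n \setminus C(B_n)$ becomes $x_1 - (\sum_{j=1}^{n-1}\alpha_j)e \notin B_n\setminus C(B_n)$, which is QARP (again matching its re-indexed sum $\sum_{j=2}^n \alpha_j$). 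Conversely any QARP-violating data configuration produces a SAARP-violating one under the same dictionary. Applying Proposition \ref{prop:CompleteRationalization} to the regular data set and the completely ordered theory $\mathcal Q$ then closes the argument.

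The only real obstacle is bookkeeping rather than mathematics: one must be careful that the order transported to $\mathcal Q$ is genuinely the \emph{complete} order required by Proposition \ref{prop:CompleteRationalization} (this is where totality of $\ge$ on the field $A$ is essential, and why a merely partial order would not suffice, as the paper remarks), and that the monotonicity-augmented notion of consistency is the one actually being characterized — so the corollary implicitly characterizes quasilinear-plus-monotone-in-numeraire preferences, which for quasilinearity is the natural reading. I would also note in passing the evident typo: the corollary's statement should read ``\dots if and only if it satisfies QARP,'' not HARP.
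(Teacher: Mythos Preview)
Your proposal is correct and follows exactly the approach the paper intends: the paper explicitly omits the proof, stating that Corollary~\ref{cor:QuasilinearApplication} ``immediately follows from Propositions~\ref{prop:Rationalization} and~\ref{prop:CompleteRationalization},'' after having noted in the text that $(\mathcal Q,\circ,\ge)$ is an ordered group and that monotonicity in the numeraire is to be incorporated. Your verification that $\alpha\mapsto f_\alpha$ is a group isomorphism $(\mathcal Q,\circ)\cong(A,+)$ transporting the total order, and your unpacking of SAARP into QARP via $\Comp_j f_{\alpha_j}=f_{\sum\alpha_j}$, simply fill in the routine details the paper leaves to the reader; you also correctly flag both the HARP/QARP typo in the corollary statement and the $B_j$/$B_{j+1}$ indexing discrepancy between the SAARP and QARP definitions.
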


\subsubsection*{Preferences Satisfying Independence}
\label{par:Independence}
	 
	Consider a set $\Omega$ of  potential outcomes, and an ordered field $A$ denoting the range for the measure function to be constructed. Since $A$ is a field, it has both $0$ and $1$, which correspond to outcomes happening with zero probability and happening for sure. 	Let $X$ be a vector space over $A$ with dimension of $X$ being equal to $\vert \Omega \vert$. That is, $X$ represents the probability distribution over the prizes in $\Omega$.	Note that $X$ can potentially be an infinitely dimensional space.

	Technically, we define the space of lotteries to be bigger than usual; properly, we should define it as a simplex over the space of outcomes. A problem which would appear in that case is that applying independence one can easily jump out of the standard simplex.	The difference between our definition and the standard one is similar to using  $\R^n$ instead of the simplex of the corresponding dimension.  That is, we construct a preference relation over a larger space but primarily for the matter of mathematical convenience and this preference relation can be truncated ex-post.  Let us note that it is possible to directly consider the simplex, however, it would require significant abuse of notation throughout the paper.\footnote{
		A way out is not to consider a theory as being a group of functions, but being \say{generated} by a group of functions.
		By being generated we mean that we can truncate both domain and image of the function to fit the set of alternatives.
	}
	

\begin{defn}
A preference relation is said to satisfy {\bf independence}
	if
	$$
	(x,y) \in R \text{ implies } (\alpha x + (1-\alpha) z, \alpha y + (1-\alpha)z ) \in R
	$$
	for every $x,y,z\in X$ and $\alpha \in A_{++}$.
\end{defn}

\noindent	
We use \citeapos{aumann1962utility} version of the independence axiom exactly because we have no completeness assumption, and we allow $\alpha$ to go above $1$. This statement of independence guarantees that the corresponding family of functions given the composition operator  form a group.

	We define 
	$$
	\gI = \{f(x) = \alpha x +  (1-\alpha)z: \alpha \in A_{++}; z\in X\}.
	$$
	Let us verify that  $(\gI,\circ)$ is a group, given that this example is less straightforward than previous examples.
	Denote $f(x) =\alpha x + (1-\alpha ) z$ by $f_{\alpha,z}(x)$.
	Hence, the inverse for this function is $f_{\frac{1}{\alpha},z}(x)$, that is also an eligible function.
	Composition of the functions is given by
	$$ 
	[f_{\alpha,z}(x) \circ f_{\alpha',z'}](x) = f_{\tilde\alpha,\tilde z}$$ 
	where 
	$$ 
	\tilde\alpha  = \alpha \alpha' \text{ and } \tilde z  =\frac{\alpha'(1-\alpha)}{1-\alpha'\alpha} z + \frac{1-\alpha'}{1-\alpha'\alpha} z'
	$$ 
	if $\alpha\alpha'\neq 1$.
	In this case, $\tilde\alpha \in A_{++}$, and $z \in X$ is the convex combination of two elegible lotteries.
	Note that when stating the notion of  theory we did not require the group $(\gF,\circ)$ to be commutative, i.e. it may be the case that $f\circ f' \neq f'\circ f$.
	Independence illustrates why we did not impose this condition, which would have simplified the exposition.
	In the case of independence, it is easy to see that commutativeness does not have to be satisfied.

	Next, we provide the specification of the AARP for the case of independence.
	The axiom can be linked to the tests of \cite{demuynck2009nash,polisson2015}.

\begin{defn}
A data set $(\gB,C)$ satisfies the {\bf Independence Axiom of Revealed Preference (IARP)} if for every sequence $x_1, \ldots, x_n$ such that $x_j\in C(B_j)$ for every $ j\in\{1,\ldots, n\}$  for some $B_1,\ldots, B_n\in \gB$, and every sequences $\alpha_1, \ldots,\alpha_{n-1} \in A_{++}$ and $z_1,\ldots, z_n \in X$ 
such that
$$
f_{\alpha_j,z_j}(x) \in B_j \text{ for every } j\in\{1,\ldots, n-1\},
$$ 
we have
$$
\left[\Comp\limits_{j=n-1}^{1} f_{\frac{1}{\alpha_j},z_j }\right] (x_1) \notin B_{n}\setminus C(B_n).
$$
\end{defn}

Note that IARP is only equivalent to rationalization. We cannot represent $\mathcal I$ as an ordered group, even though this group can be partially ordered. Note also that IARP is equivalent to complete rationalization as one can show that it is (Farkas) alternative to the conditions of \cite{polisson2015}. However, $\mathcal I$ introduces a lot of structure to the functions, i.e.\ requiring them to be linear. Studying theories induced by groups of linear functions can be a fruitful avenue for the further research.

\begin{cor}
\label{cor:IndependentApplication}
A data set is rationalizable with preferences satisfying independence if and only if it satisfies IARP
\end{cor}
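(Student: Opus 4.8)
The plan is to obtain Corollary~\ref{cor:IndependentApplication} as the specialization of Proposition~\ref{prop:Rationalization} to the group $\gI$. The first step is a translation of vocabulary: by the definition of consistency with a theory, a preference relation $R$ is consistent with $\gI$ exactly when $(x,y)\in R$ implies $(f_{\alpha,z}(x),f_{\alpha,z}(y))=(\alpha x+(1-\alpha)z,\alpha y+(1-\alpha)z)\in R$ for every $\alpha\in A_{++}$ and every $z\in X$, which is literally the independence property in the sense of \cite{aumann1962utility}. Consequently, ``rationalizable with preferences satisfying independence'' and ``rationalizable for the theory $\gF=\gI$'' (in the earlier sense) are the same notion, and it remains only to verify that SAARP, instantiated at $\gF=\gI$, reduces to IARP.

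For this second step I would substitute $f_j=f_{\alpha_j,z_j}$ throughout the statement of SAARP. The hypothesis that the points $f_j(x_{j+1})$ lie in the corresponding budgets becomes the hypothesis of IARP that the points $f_{\alpha_j,z_j}(x_{j+1})$ lie in those budgets. For the conclusion I would use that each $f_{\alpha,z}$ is invertible with $f_{\alpha,z}^{-1}=f_{1/\alpha,z}$ --- a one-line computation gives $f_{\alpha,z}(f_{1/\alpha,z}(x))=x$ --- together with the standard fact $(g_1\circ\cdots\circ g_{n-1})^{-1}=g_{n-1}^{-1}\circ\cdots\circ g_1^{-1}$. This turns $\left[\Comp\limits_{j=1}^{n-1}f_{\alpha_j,z_j}\right]^{-1}(x_1)$ into $\left[\Comp\limits_{j=n-1}^{1}f_{1/\alpha_j,z_j}\right](x_1)$, which is exactly the term appearing in IARP. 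With hypotheses and conclusions aligned, SAARP for $\gI$ and IARP coincide, and Proposition~\ref{prop:Rationalization} yields the stated equivalence. I would also note that, since $\gI$ admits no total order compatible with composition (its functions are merely partially ordered), Proposition~\ref{prop:CompleteRationalization} does not apply, which is why the statement is confined to rationalizability rather than complete rationalizability.

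The piece of bookkeeping requiring the most care --- and the step I expect to be the main obstacle --- is confirming that $(\gI,\circ)$ is genuinely a group. The composition identity $f_{\alpha,z}\circ f_{\alpha',z'}=f_{\tilde\alpha,\tilde z}$ with $\tilde\alpha=\alpha\alpha'$ was recorded in the text only under the proviso $\alpha\alpha'\neq 1$; when $\alpha\alpha'=1$ the composition collapses to a translation of $X$, which is not of the form $f_{\tilde\alpha,\tilde z}$ unless $z=z'$. To close this gap I would either enlarge $\gI$ to include the relevant translations and re-check closure and inverses on the larger set, or invoke the ``generated by a group'' device flagged in the footnote, taking the theory to be generated by $\gI$ after truncating domains and ranges to $X$; either route then reduces to confirming closure, associativity (inherited from function composition), the identity $f_{1,z}=I$, and the inverse formula above. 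Once the group structure is secured, the remainder is the routine substitution described, with no further obstacle.
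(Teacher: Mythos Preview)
Your proposal is correct and follows exactly the route the paper takes: the paper states that Corollary~\ref{cor:IndependentApplication} ``immediately follows from Propositions~\ref{prop:Rationalization} and~\ref{prop:CompleteRationalization}'' and omits the proof, so your explicit unpacking (identify consistency with $\gI$ as independence, specialize SAARP to IARP via $f_j=f_{\alpha_j,z_j}$ and the inverse formula, then invoke Proposition~\ref{prop:Rationalization}) is precisely the intended argument. Your observation about the $\alpha\alpha'=1$ closure gap is well taken and is in fact a lacuna the paper itself does not resolve beyond the footnote you cite; either of your suggested fixes would close it.
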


\subsection{Behavioral Applications}
Further we provide application to the theories which can be associated with limited attention.
In particular, we consider theories of sequential choice as in \cite{manzini2013two} and good enough choice as in \cite{barbera2019order}.  Sequential procedures involve two preference relations---the first preference relation generates a consideration set, and the second generates a choice out of the consideration set formed at the first stage.
Under good enough choice, a choice is not the result of proper maximization but a selection of the good enough alternative.
That is, the consideration set may exclude some alternatives better than the chosen one.

\subsubsection*{Sequential Procedures}
\label{par:SequentialApplication}

The model for rationalization is adapted from \cite{manzini2013two}, and it is an example of sequential decision rules.
Unlike other models of sequential decision making \citep[see e.g.][]{manzini2007sequentially,manzini2012choice}, it assumes that all alternatives passed the preliminary stages are better than some threshold.  Next we provide the formal statement of the model.

Let $X$ be a universal set of alternatives.
Denote these two preference relations by $R_1$ and $R_2$ which correspond to the first and second stage of the decision-making process.
Let $\gF_1$ be a theory and $\gF_2$ be an ordered theory, as the requirement for the second stage preference relation would include both completeness and transitivity.

\begin{defn}
A data set $(\gB,C)$ is {\bf sequentially rationalizable} if there are
\begin{itemize}
	\item [--]  a binary relation $R_1$ consistent with theory $\gF_1$ and complete and
	\item [--] a transitive binary relation $R_2$  consistent with ordered theory $\gF_2$,
\end{itemize}  
such that
\begin{align*}
& \Gamma(B) = \max(B,R_1) \mbox{ and } \\
& C(B) = \max(\Gamma(B),R_2).
\end{align*}
\end{defn}

\noindent
In line with the general version,  sequential rationalization assumes that the chosen alternative is the best in the consideration set (according to $R_2$). With respect to the consideration set, we assume that all alternatives above a threshold are considered.  That is, there is a threshold alternative $y$ (or the entire equivalence class of such alternatives), that is the worst (according to $R_1$) alternative considered.  Every alternative considered is at least as good as $y$, and every alternative not considered is strictly worse than $y$.

\begin{defn}
A regular data set $(\gB,C)$ satisfies the {\bf Sequentially Strong Algebraic Axiom of Revealed Preference (S-SAARP)} if there is $\Gamma: \gB\rightrightarrows X$ such that $C(B)\subseteq \Gamma(B)\subseteq B$ and

\begin{itemize}
	\item [(i)] $(\gB,\Gamma)$ satisfies WAARP with respect to $\gF_1$, and
	\item [(ii)] $(\Gamma(\gB), C)$ is a regular data set satisfies SAARP with respect to $\gF_2$.
\end{itemize}
\end{defn}

Note that the axiom above is computationally feasible. The reason is that we only care whether other chosen points belong to $\Gamma$, and to control for that we can use integer variables. 	While all tests presented  previously can be implemented with linear programming techniques, the S-SAARP can be implemented as a mixed integer program with no more than $|\gB|(|\gB|-1)$ integer variable. Finally, note that we not only require $(\Gamma(\gB), C)$ to satisfy SAARP but also to be regular.
	The latter requirement implicitly imposes the monotonicity as being one of the requirements to be consistent with ordered theory.

\begin{cor}
\label{cor:TwoStageApplication}
A regular data set $(\gB,C)$ is sequentially rationalizable if and only if it satisfies S-SAARP.
\end{cor}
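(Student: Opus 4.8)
The plan is to read S-SAARP as the conjunction of two ordinary one-theory revealed-preference conditions and to feed each into the rationalization result that matches it. Concretely, condition (i) says exactly that the auxiliary data set $(\gB,\Gamma)$ is completely weakly rationalizable with respect to $\gF_1$, which is Proposition~\ref{prop:CompleteWeakRationalization}; condition (ii) says exactly that $(\Gamma(\gB),C)$ is completely rationalizable with respect to $\gF_2$, which is Proposition~\ref{prop:CompleteRationalization} (applicable because $\gF_2$ is a completely ordered theory and, by the wording of (ii), $(\Gamma(\gB),C)$ is regular). Thus the corollary is essentially a matter of gluing these two equivalences through the intermediate consideration correspondence $\Gamma$, and the only real content is in producing $\Gamma$ in the ``only if'' direction.

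\emph{Sufficiency.} Suppose S-SAARP holds with witness $\Gamma$. By (i) and Proposition~\ref{prop:CompleteWeakRationalization} there is a complete binary relation $R_1$ consistent with $\gF_1$ with $\Gamma(B)=\max(B,R_1)$ for every $B\in\gB$. By (ii) and Proposition~\ref{prop:CompleteRationalization} there is a transitive complete binary relation $R_2$ consistent with the ordered theory $\gF_2$ with $C(B)=\max(\Gamma(B),R_2)$ for every $B\in\gB$. Then $(R_1,R_2)$ witnesses sequential rationalizability with this very $\Gamma$. Nothing else is needed here.

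\emph{Necessity.} Suppose $(\gB,C)$ is sequentially rationalizable via $R_1,R_2$. The naive choice $\Gamma(B)=\max(B,R_1)$ already delivers most of what is wanted: $C(B)=\max(\Gamma(B),R_2)\subseteq\Gamma(B)\subseteq B$; $(\gB,\Gamma)$ is completely weakly rationalizable by $R_1$, so it satisfies WAARP with respect to $\gF_1$ by Proposition~\ref{prop:CompleteWeakRationalization}; and $(\Gamma(\gB),C)$ is rationalizable by the transitive $R_2$, so it satisfies SAARP with respect to $\gF_2$ by Proposition~\ref{prop:Rationalization}. The border clause of the regularity of $(\Gamma(\gB),C)$ is immediate, since $\Gamma(B)\subseteq B$ and $(\gB,C)$ is itself regular. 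The one genuinely nontrivial point, and the main obstacle, is the downward-closure clause of the regularity of $(\Gamma(\gB),C)$: the set $\max(B,R_1)$ need not be $\gF_2$-downward closed, because $R_1$ is constrained only by $\gF_1$ while the relevant order lives on $\gF_2$.

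To handle this I would replace the naive $\Gamma$ by the $\gF_2$-downward closure of the choice sets inside the budgets, $\Gamma(B)=\{\,m(c):c\in C(B),\ m\in\gF_2,\ m\le I\,\}$. The ordered-group identities ($g\le h$ implies $g\circ k\le h\circ k$ and $k\circ g\le k\circ h$) make this set $\gF_2$-downward closed; the two clauses of the regularity of $(\gB,C)$, used with the identity and $x=c$, give $C(B)\subseteq\Gamma(B)\subseteq B$ and put the chosen points on the border of $\Gamma(B)$, so $(\Gamma(\gB),C)$ is regular. That $C(B)=\max(\Gamma(B),R_2)$, hence SAARP for $(\Gamma(\gB),C)$ via Proposition~\ref{prop:Rationalization}, follows because consistency of $R_2$ with the ordered theory together with reflexivity forces $(f(x),\bar f(x))\in R_2$ whenever $\bar f\le f$, so each $m(c)$ is weakly $R_2$-dominated by $c$, while the border property keeps $\Gamma(B)\setminus C(B)$ out of $\max(\Gamma(B),R_2)$. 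What remains, and what I expect to be the actual work, is re-establishing condition (i) for this enlarged $\Gamma$: since $\Gamma(B)$ is in general neither $\max(B,R_1)$ nor contained in it, one cannot simply reuse $R_1$, and must instead ``push $R_1$ down along $\gF_2$'', declaring $m(c)$ indifferent to $c$ for all $m\le I$ and $c\in C(B)$, then taking the $\gF_1$-closure and completing, and verifying that this enlargement stays consistent with $\gF_1$ and creates no new maximizers. Reconciling the $\gF_1$-side with the $\gF_2$-side---the two halves do not interact a priori---is exactly where the regularity hypothesis on $(\gB,C)$ is used, and is the crux of the proof.
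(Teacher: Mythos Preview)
Your sufficiency argument is exactly the paper's: fix the witness $\Gamma$, apply the weak result to $(\gB,\Gamma)$ and the strong result to $(\Gamma(\gB),C)$, and read off $R_1,R_2$.

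For necessity the paper does far less than you propose. It simply sets $\Gamma(B)=\max(B,R_1)$, cites Proposition~\ref{prop:WeakRationalization} for WAARP of $(\gB,\Gamma)$ and Proposition~\ref{prop:CompleteRationalization} for SAARP of $(\Gamma(\gB),C)$, and stops. In particular, the paper never verifies---or even mentions---the regularity of $(\Gamma(\gB),C)$ that clause~(ii) of S-SAARP literally requires. The ``main obstacle'' you isolate (downward closure of $\max(B,R_1)$ with respect to the $\gF_2$-order) is simply not treated.

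So your proposal diverges from the paper precisely at the point where you become more careful than the authors. Your alternative construction $\Gamma(B)=\{m(c):c\in C(B),\,m\le I\}$ does buy regularity and SAARP on the second stage, but, as you yourself flag, it destroys the link to $R_1$ and leaves condition~(i) open; the sketch of ``pushing $R_1$ down along $\gF_2$'' is not an argument, and there is no obvious reason the $\gF_1$-closure of that enlarged relation should still make exactly $\Gamma(B)$ maximal in $B$. In short: relative to the paper you have not omitted anything---you have added a genuine worry the paper ignores---but your attempted repair is incomplete, and the route you outline for finishing it is where the real gap lies.
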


\subsubsection*{Good Enough Procedure}
\label{par:GoodEnoughApplication}

We adopt the idea of good enough choice as the chosen alternative being among the $k$-maximal alternatives in the set.
This theory also assumes that underlying preference relation is complete and transitive.
Hence, we assume that $\gF$ is an ordered theory.

Let $X$ be a (no more than countable) set of alternatives, let $1 <k <\vert X \vert$ be the integer number which defines the $k$-maximal alternatives (number can be specific for each budget) to be chosen from.
The logic is that alternatives chosen are among the ``good enough'' set.
For every budget $B$ let
$$ 
M^1_{R}(B) = \argmax\limits_{R} B
\text{ and } 
M^t_{R}(B) = M^{t-1}\cup \argmax_{R} (B\setminus M^{t-1}_R(B)).
$$
That is $M^k(B)$ contains $k$-maximal alternatives determining the ``good enough'' set.
Note that every alternative in $M^k(B)$ is strictly better than any alternative outside the $M^k(B)$.

\begin{defn}
A data set is {\bf ``good enough'' rationalizable} if there is a complete and transitive binary relation $R$ consistent with ordered theory $\gF$ such that there is $\Gamma(B) \supseteq M^{k}_R(B)$ such that 
$$ 
C(B) = \max (\Gamma(B)\cup C(B), R)
$$
for every $B\in \gB$.
\end{defn}

\noindent
Before we proceed further, we introduce a supplementary construct.
A data set $(\bar \gB,\bar C)$ is a \emph{truncated data set} generated by $(\gB,C)$ if for every $B\in \gB$ there is is set $B^{\downarrow}\subseteq B\setminus C(B)$ such that $\vert B\setminus B^{\downarrow}\vert \leq k$ and for every $B\in \gB$ there are 
\begin{itemize}
	\item [--] $\bar B^0 = B^{\downarrow}\cup C(B)\in \bar \gB$,
	\item [--] $\bar B^y = B^{\downarrow}\cup\{y\}\in \bar \gB$ for every $y\in B\setminus (B^{\downarrow}\cup C(B))$
\end{itemize}
such that
\begin{itemize}
	\item [--] $\bar C(\bar B^0) = C(B)$ for every $\bar B^0\in \bar \gB$,
	\item [--] $\bar C(\bar B^y) = y$ for every $\bar B^y\in \bar \gB$.
\end{itemize}

\noindent
The idea behind the extended experiment is that the $B^{\downarrow}$ refers to the $\Gamma(B)$, that is the consideration set.
The  requirement on cardinality of $B^{\uparrow}$ makes sure that $\Gamma(B) \supseteq M^{k(B)}_R(B)$.

\begin{defn}
A data set $(\gB,C)$ satisfies the {\bf ``Good Enough'' Strong Algebraic Axiom of Revealed Preference (GE-SAARP)} if there is a truncated regular data set $(\bar \gB,\bar C)$ generated by $(\gB,C)$ that satisfies SAARP.
\end{defn}

In terms of computational complexity, we can implement GE-SAARP as a mixed integer program as in the previous case.
	The number of integer variables would not exceed $\vert\gB\vert(\vert \gB\vert-1)$, since we only need to track whether the (transformation) of every chose point is in $B^{\uparrow}$ or not. 
	Finally,  note that we require the truncated data set to be regular.
	That is another example how regularity becomes a part of the testable restrictions.

\begin{cor}
\label{cor:GoodEnoughApplication}
A data set is ``good enough'' rationalizable if and only if it satisfies GE-SAARP
\end{cor}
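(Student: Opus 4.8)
\noindent
The plan is to use Proposition~\ref{prop:CompleteRationalization} as a black box. By definition, GE-SAARP asserts that some truncated regular data set $(\bar\gB,\bar C)$ generated by $(\gB,C)$ satisfies SAARP; and since $\gF$ is a completely ordered theory, Proposition~\ref{prop:CompleteRationalization} says this is equivalent to $(\bar\gB,\bar C)$ being completely rationalizable, i.e.\ to the existence of a complete transitive $R$ consistent with the ordered theory $\gF$ with $\bar C(\bar B)=\max(\bar B,R)$ for every $\bar B\in\bar\gB$. So the whole statement reduces to a purely combinatorial equivalence: $(\gB,C)$ is ``good enough'' rationalizable if and only if some truncated regular data set generated by $(\gB,C)$ is completely rationalizable. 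I would prove the two implications separately.

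\noindent
For the direction from a completely rationalizable truncated data set to ``good enough'' rationalizability, I would take such an $R$ together with the sets $B^{\downarrow}\subseteq B\setminus C(B)$ used in building $(\bar\gB,\bar C)$. From $\bar C(\bar B^{0})=\max(B^{\downarrow}\cup C(B),R)=C(B)$ one reads off that every $c\in C(B)$ is weakly $R$-preferred to every element of $B^{\downarrow}$, and that the defining equation $C(B)=\max(\Gamma(B)\cup C(B),R)$ holds once $\Gamma(B)$ is taken as prescribed by the recipe. From $\bar C(\bar B^{y})=\max(B^{\downarrow}\cup\{y\},R)=\{y\}$ for every $y\in B\setminus(B^{\downarrow}\cup C(B))$ one reads off that each such $y$ is also weakly $R$-preferred to all of $B^{\downarrow}$; hence every alternative of $B\setminus B^{\downarrow}$ weakly $R$-dominates every alternative of $B^{\downarrow}$. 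Since $|B\setminus B^{\downarrow}|\le k$, transitivity of $R$ together with the iterative definition of $M^{t}_{R}$ then pins down the containment between $B\setminus B^{\downarrow}$ and $M^{k}_{R}(B)$ demanded by the definition, and in particular gives $C(B)\subseteq M^{k}_{R}(B)$, so that $R$ ``good enough'' rationalizes $(\gB,C)$.

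\noindent
For the converse, starting from a ``good enough'' rationalization by a complete transitive $R$ consistent with the ordered theory $\gF$, for each budget $B$ I would let $B^{\downarrow}$ be the set of alternatives of $B$ lying strictly $R$-below every alternative of $M^{k}_{R}(B)$, and then define $(\bar\gB,\bar C)$ by the truncated-data-set recipe. One then checks $B^{\downarrow}\subseteq B\setminus C(B)$ and $|B\setminus B^{\downarrow}|\le k$ from $C(B)\subseteq M^{k}_{R}(B)$ and the size of $M^{k}_{R}(B)$; that $\bar C(\bar B^{0})=C(B)$ and $\bar C(\bar B^{y})=\{y\}$ because every retained alternative is strictly $R$-above all of $B^{\downarrow}$ (this strict separation is what produces the \emph{unique} maxima); and that $(\bar\gB,\bar C)$ is regular directly from the order structure of $\gF$ and the monotonicity already built into consistency with an ordered theory. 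Since $R$ manifestly rationalizes $(\bar\gB,\bar C)$ completely, Proposition~\ref{prop:CompleteRationalization} delivers SAARP for $(\bar\gB,\bar C)$, which is exactly GE-SAARP.

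\noindent
The step I expect to be the main obstacle is the combinatorial bookkeeping tying the iterative, indifference-class definition of $M^{k}_{R}$ to the cardinality bound $|B\setminus B^{\downarrow}|\le k$, together with — in the converse direction — the simultaneous requirements that the constructed truncated data set be regular and still be rationalized by the very same $R$. In particular, getting the \emph{unique} maxima $\max(B^{\downarrow}\cup\{y\},R)=\{y\}$ needs strict, not merely weak, dominance of the retained alternatives over $B^{\downarrow}$, so the definition of $B^{\downarrow}$ must be chosen with exactly that in mind, and one must verify that this choice does not clash with the downward-closure clauses of regularity.
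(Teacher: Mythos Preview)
Your proposal is correct and follows essentially the same route as the paper: reduce both directions to Proposition~\ref{prop:CompleteRationalization} applied to a truncated regular data set, and verify the combinatorics linking the truncated data set to the ``good enough'' rationalization. One small simplification the paper uses in the $(\Rightarrow)$ direction: rather than reconstructing $B^{\downarrow}$ from $M^{k}_{R}(B)$, it takes $B^{\downarrow}=\Gamma(B)$ directly from the rationalization data, which sidesteps the indifference-class cardinality bookkeeping you flagged as the main obstacle.
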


\section{Concluding remarks}

\noindent
	We provide a comprehensive algebraic approach to revealed preference.
	This allows us to construct a revealed preference test which serves as a criteria for every data set to be rationalizable with a theory which satisfies group structure.
	We show that this revealed preference axiom unifies a variety of existing axioms, including those for transitive \citep[see][]{afriat1967,diewert1973afriat,forges2009,nishimura2017comprehensive}, homothetic \citep[see][]{varian1983,heufer2013testing,heufer2019homothetic}, and quasilinear preferences \citep[see][]{brown2007nonparametric,castillofreer2016}, as well as preferences that satisfy the independence axiom \citep[see][]{demuynck2009nash,polisson2015}.
	In addition, we show how the general results we provide can be used to construct tests for behavioral theories.
	We do so by constructing generalized revealed preference tests to the theories of sequential decision making and ``good enough'' choice.

	The version of SAARP we provide for preferences that satisfy the independence axiom shows that SAARP may (in some instances) take the form of a linear program.
	Linear programming is frequently used because of its computational tractability.
	If a test cannot be implemented using a linear program it often becomes intractable, and a researcher may only hope for very approximate solution.
	Therefore, characterizing the set of theories which can be tested using linear programming can be a fruitful avenue for further research based on the foundation this paper provides.

	Another fruitful set of applications comes from the revealed preference analysis of the group behavior.
	There has been recently some interest in applying revealed preference to social choice theory, e.g.\ \citet{duggan2016limits,duggan2019weak}; voting theory, e.g.\ \citet{tasos} and \citet{gomberg2018revealed},  and game theoretic models, e.g.\ \citet{brown1996testable}, \cite{echenique2013}, \cite{cherchye2013nash,cherchye2017household}, and \cite{castillo2019nonparametric}. 
	The unified algebraic approach may help to advance the progress along these lines, allowing to evaluate the set of questions which can be tackled with revealed preference theory.

\appendix
\section{Proofs for Section \ref{sec:Results}}

\setcounter{table}{0}
\renewcommand{\thetable}{A.\arabic{table}}
\setcounter{figure}{0}
\renewcommand{\thefigure}{A.\arabic{figure}}
\setcounter{lemma}{0}
\renewcommand{\theprop}{A.\arabic{prop}}
\setcounter{prop}{0}
\renewcommand{\thelemma}{A.\arabic{lemma}}
\setcounter{defn}{0}
\renewcommand{\thedefn}{A.\arabic{defn}}
\setcounter{cor}{0}
\renewcommand{\thecor}{A.\arabic{cor}}

Before we proceed with the proofs we need to introduce some auxiliary notation. 
Denote by $\gR$ the space of preference relations (reflexive binary relations).
Let $F:\gR\rightarrow \gR$ be a \emph{theory closure} corresponding to theory $\gF$, that is $(x,y)\in F(R)$ if there is $f\in \gF$ such that $(f(x),f(y))\in R$. We would omit an explicit reference to the theory when this can be done without confusion.
Let $T:\gR\rightarrow \gR$ be a \emph{transitive closure} that is $(x,y)\in T(R)$ if there is a finite sequence $x=s_1,\ldots,s_n=y$ such that $(s_j,s_{j+1})\in R$ for every $j\in \{1,\ldots,n-1\}$.
The idea behind the closures is that they are constructive coutnerpart of the consistency of preference relation with given assumption.
Hence, the important construct is the fixed point of the closure, i.e. $R=F(R)$ and $R=T(R)$.

Before we proceed to showing the connection between the constructed fucntions and some formal properties which would show that these function are actually closures.
In particular, we are interested in the three following properties.
A function $G:\gR\rightarrow \gR$ is said to be \emph{increasing} if $R\subseteq G(R)$.
A function $G:\gR\rightarrow \gR$ is said to be \emph{monotone} if  $R\subseteq R'$ implies $G(R)\subseteq G(R')$.
A function $G:\gR\rightarrow \gR$ is said to be \emph{idempotent} if $G(G(R))=G(R)$.
A function $G:\gR\rightarrow \gR$ is said to be a \emph{closure} if it is increasing, monotone and idempotent.

\begin{lemma}[\cite{demuynck2009}]
\label{lemma:TransitiveClosure}
$T:\gR\rightarrow \gR$ is a closure.
\end{lemma}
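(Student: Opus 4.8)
The plan is to verify directly that the transitive closure operator $T$ satisfies the three defining properties of a closure: it is increasing, monotone, and idempotent. Each of these follows essentially from unwinding the definition of $T(R)$ as the set of pairs $(x,y)$ connected by a finite $R$-chain $x = s_1, s_2, \ldots, s_n = y$ with $(s_j, s_{j+1}) \in R$.

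First I would show $T$ is increasing, i.e.\ $R \subseteq T(R)$. Given $(x,y) \in R$, take the length-two sequence $x = s_1, s_2 = y$; this witnesses $(x,y) \in T(R)$. Next, monotonicity: suppose $R \subseteq R'$ and $(x,y) \in T(R)$, so there is a sequence $x = s_1, \ldots, s_n = y$ with $(s_j, s_{j+1}) \in R \subseteq R'$ for each $j$; the same sequence witnesses $(x,y) \in T(R')$, so $T(R) \subseteq T(R')$. The only slightly less trivial step is idempotency, $T(T(R)) = T(R)$. The inclusion $T(R) \subseteq T(T(R))$ is immediate since $T$ is increasing (applied to $T(R)$ in place of $R$). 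For the reverse inclusion, take $(x,y) \in T(T(R))$, so there is a sequence $x = s_1, \ldots, s_n = y$ with $(s_j, s_{j+1}) \in T(R)$ for each $j$; by definition of $T(R)$, each consecutive pair $(s_j, s_{j+1})$ is itself joined by a finite $R$-chain, and concatenating these $n-1$ chains (they agree at the shared endpoints $s_j$) produces a single finite $R$-chain from $x$ to $y$, so $(x,y) \in T(R)$.

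I do not anticipate a genuine obstacle here; the statement is attributed to \cite{demuynck2009} and is standard. The one point requiring a modicum of care is ensuring that $T(R)$ really is a reflexive binary relation so that $T$ maps $\gR$ into $\gR$ — reflexivity of $T(R)$ is inherited from reflexivity of $R$ (each $(x,x) \in R \subseteq T(R)$), and transitivity of $T(R)$, though not needed for the closure properties per se, also falls out of the concatenation argument. The bookkeeping with chain concatenation in the idempotency step is the most substantive piece, but it is routine.
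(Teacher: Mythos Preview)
Your proposal is correct and follows the standard direct verification of the three closure axioms. The paper itself does not supply a proof of this lemma; it simply attributes the result to \cite{demuynck2009}, so there is no in-paper argument to compare against, and your approach is exactly the expected one.
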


\begin{lemma}
\label{lemma:TheoryClosure}
$F:\gR\rightarrow\gR$ is a closure.
\end{lemma}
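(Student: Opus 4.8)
The plan is to verify directly that $F$ satisfies the three defining properties of a closure: increasingness, monotonicity, and idempotence. Each will follow from the group structure of $(\gF,\circ)$, so the proof is essentially an exercise in bookkeeping with the defining formula $(x,y)\in F(R)$ iff there exists $f\in\gF$ with $(f(x),f(y))\in R$.

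First I would show $F$ is \emph{increasing}, i.e. $R\subseteq F(R)$. Given $(x,y)\in R$, take $f=I\in\gF$ (identity exists by the group axioms); then $(I(x),I(y))=(x,y)\in R$, so $(x,y)\in F(R)$. Second, for \emph{monotonicity}, suppose $R\subseteq R'$ and $(x,y)\in F(R)$. Then there is $f\in\gF$ with $(f(x),f(y))\in R\subseteq R'$, hence $(x,y)\in F(R')$; thus $F(R)\subseteq F(R')$. I should also note in passing that $F(R)$ is itself a reflexive binary relation whenever $R$ is — reflexivity is inherited since $(f(x),f(x))\in R$ for any $f$ — so $F$ indeed maps $\gR$ to $\gR$.

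The main work is \emph{idempotence}, $F(F(R))=F(R)$. The inclusion $F(R)\subseteq F(F(R))$ is immediate from increasingness applied to $F(R)$. For the reverse inclusion $F(F(R))\subseteq F(R)$, take $(x,y)\in F(F(R))$: there is $g\in\gF$ with $(g(x),g(y))\in F(R)$, and then there is $h\in\gF$ with $(h(g(x)),h(g(y)))\in R$, i.e. $((h\circ g)(x),(h\circ g)(y))\in R$. Since $\gF$ is closed under composition, $h\circ g\in\gF$, so witnessing $f=h\circ g$ gives $(x,y)\in F(R)$. This closure step is the part where the group axioms (specifically closure under $\circ$, plus associativity to make $(h\circ g)(x)=h(g(x))$ unambiguous) do the real work; the identity axiom is what powers increasingness. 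I do not expect any serious obstacle — the only thing to be careful about is applying the definition of $F$ correctly at the two nested levels and invoking the right group axiom at each step.

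Finally I would remark that the existence of inverses in $\gF$ is not actually needed for $F$ to be a closure; it plays its role elsewhere (in the equivalence between consistency and the algebraic axioms, via the ``reverse implication'' discussed after the group definition). So the proof of this lemma uses only that $(\gF,\circ)$ contains the identity and is closed under composition, which is worth stating explicitly for clarity.
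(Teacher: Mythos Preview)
Your proof is correct and follows essentially the same route as the paper's: both verify increasingness via the identity $I\in\gF$, monotonicity by the obvious inclusion argument, and idempotence by composing the two witnessing functions using closure of $\gF$ under $\circ$. Your additional remarks (that $F$ preserves reflexivity and that inverses are not needed here) are accurate and go slightly beyond what the paper states, but the core argument is identical.
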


\begin{proof}
Following the definition of the closure we are going to prove the three properties which define it.

\bigskip

\noindent
{\bf $F$ is increasing.}
\\
Recall that $I\in \gF$. Hence, for every $(x,y)\in R$ then letting $f=I = I^{-1}$, we know that $f(x,y)\in R$, and therefore $(x,y)\in F(R)$.
\newline

\noindent
{\bf $F$ is monotone.}
\\
Take $R\subseteq R'$, $(x,y)\in F(R)$ if there is $f\in \gF$ such that $(f(x),f(y)) \in R\subseteq R'$.
Hence, $(x,y)\in F(R')$ and therefore, $F(R)\subseteq F(R')$.

\bigskip

\noindent
{\bf $F$ is idempotent.}
\\
Note that from the fact that $F$ is increasing we know that $F(R)\subseteq F(F(R))$.
Hence, we are left to show that $F(F(R))\subseteq F(R)$.
Consider $(x,y)\in F(F(R))$, then there is $f\in\gF$ such that $(f(x),f(y))\in F(R)$.
Then, in its order there is $f'\in \gF$ such that $(f'(f(x)),f'(f(y)))\in R$.
Given that $(\gF,\circ)$ is a group, then there is $\hat f = f'\circ f\in \gF$ and $(\hat f(x),\hat f(y))\in R$, and therefore, $(x,y)\in F(R)$.

\end{proof}

\noindent
Next, let us show that the fixed point of the corresponding closure (theory or transitive) exhibit the properties reflected in the assumption.
The result we provide for the transitive closure is straight-forward and based in the previous literature.

\begin{lemma}[\cite{demuynck2009,freer2016representation}]
\label{lemma:TransitiveFixedPoint}
$T(R)$ is transitive, i.e. $(x,y),(y,z)\in T(R)$ implies $(x,z)\in T(R)$ for every $x,y,z\in X$.
\end{lemma}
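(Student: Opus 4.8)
\textbf{Proof plan for Lemma~\ref{lemma:TransitiveFixedPoint}.} The statement asserts that the transitive closure $T(R)$ of a reflexive binary relation $R$ is itself transitive. The plan is to unfold the definition of $T$ and concatenate witnessing sequences. First I would recall that, by definition, $(x,y)\in T(R)$ means there is a finite sequence $x=s_1,s_2,\ldots,s_m=y$ with $(s_i,s_{i+1})\in R$ for every $i\in\{1,\ldots,m-1\}$. So suppose $(x,y)\in T(R)$ and $(y,z)\in T(R)$: pick a sequence $x=s_1,\ldots,s_m=y$ witnessing the first membership and a sequence $y=t_1,\ldots,t_k=z$ witnessing the second.

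The key step is then to form the concatenated sequence $x=s_1,\ldots,s_m=y=t_1,\ldots,t_k=z$. Every consecutive pair in this longer sequence is either a consecutive pair from the first sequence or a consecutive pair from the second sequence (the join point $s_m=y=t_1$ is handled automatically since the last link of the first sequence ends at $y$ and the first link of the second sequence starts at $y$). Hence every consecutive pair lies in $R$, and the concatenated sequence is a finite sequence from $x$ to $z$ witnessing $(x,z)\in T(R)$. This establishes transitivity of $T(R)$ for arbitrary $x,y,z\in X$.

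There is essentially no obstacle here — the argument is purely a bookkeeping exercise about concatenating finite sequences, which is why the excerpt attributes it to prior work and flags it as straightforward. The only minor point worth stating explicitly is that the lengths $m,k$ are finite, so the concatenation is again a finite sequence of length $m+k-1$, keeping us within the definition of $T$; reflexivity of $R$ is not even needed for this particular lemma (it only matters for $T$ being well-defined as a map into $\gR$). I would therefore keep the proof to a couple of sentences, simply exhibiting the concatenated sequence and noting that each of its links belongs to $R$ by construction.
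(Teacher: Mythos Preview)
Your proposal is correct: the concatenation of the two witnessing finite sequences is exactly the standard argument that $T(R)$ is transitive, and there are no gaps. The paper does not supply its own proof of this lemma---it simply attributes the result to \cite{demuynck2009,freer2016representation}---so your short sequence-concatenation argument is precisely what is intended and would serve as the omitted proof.
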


\noindent
The result for theory closure is stronger than one would think we need.
Recall that consistency with the theory implies that $(x,y)\in R$ implies $(f(x),f(y))\in R$ for every $x,y\in X$ and $f\in \gF$.
We show the equivalence instead of implication as being needed in the further proofs.  
That is, $(x,y)\in R$ if and only if $(f(x),f(y))\in R$ for every $x,y\in X$ and $f\in \gF$

\begin{lemma}
\label{lemma:TheoryFixedPoint}
$(x,y)\in F(R)$ if and only if $(f(x),f(y))\in F(R)$ for every $x,y\in X$ and $f\in \gF$.
\end{lemma}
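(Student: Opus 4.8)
The plan is to reduce the biconditional to its forward implication, and then obtain the converse for free by feeding the inverse function into that implication. The first move is simply to unfold the definition of the theory closure $F$: the assertion $(x,y)\in F(R)$ means precisely that there exists some $g\in\gF$ with $(g(x),g(y))\in R$. All three ingredients of the group structure of $(\gF,\circ)$ — containing $I$, closure under $\circ$, closure under inverses — will be used.

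For the forward direction, I would take $(x,y)\in F(R)$, fix a witness $g\in\gF$ with $(g(x),g(y))\in R$, and fix an arbitrary $f\in\gF$. The candidate witness for $(f(x),f(y))\in F(R)$ is $h := g\circ f^{-1}$, which belongs to $\gF$ because $\gF$ is closed under taking inverses and under composition. One then computes $h(f(x)) = g\big(f^{-1}(f(x))\big) = g(I(x)) = g(x)$, and identically $h(f(y)) = g(y)$, so $\big(h(f(x)),h(f(y))\big) = (g(x),g(y))\in R$, which is exactly the statement that $(f(x),f(y))\in F(R)$. This is the same group-theoretic manipulation that drives the idempotence argument in Lemma \ref{lemma:TheoryClosure}.

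For the converse, suppose $(f(x),f(y))\in F(R)$. Since $\gF$ is a group, $f^{-1}\in\gF$, so I can apply the forward implication already proved — but now to the pair $\big(f(x),f(y)\big)$ and the function $f^{-1}\in\gF$ — to conclude that $\big(f^{-1}(f(x)),f^{-1}(f(y))\big) = (x,y)\in F(R)$. I do not expect a real obstacle here; the only point requiring care is the order of quantifiers: the forward implication has to be established uniformly for every $f\in\gF$ so that it is legitimately reusable with $f^{-1}$ substituted for $f$. Note also that nothing in the argument uses that $R$ is a fixed point of $F$ or $T$; it holds for an arbitrary $R\in\gR$.
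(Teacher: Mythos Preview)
Your proof is correct and follows essentially the same approach as the paper: both produce the witness $g\circ f^{-1}$ (the paper writes it as $\bar f\circ f^{-1}$, modulo a typo) for the forward direction. Your converse is marginally slicker---you reuse the forward implication with $f^{-1}$ in place of $f$, whereas the paper unwinds the definition a second time---but the underlying idea is identical.
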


\begin{proof}
{\bf  ($\Rightarrow$)}
Since $(x,y)\in F(R)$, then there are $(z,w)\in R$ such that $z =\bar f(x)$ and $w = \bar f(y)$ for some $\bar f\in \gF$.
Since $(\gF,\circ)$ is a group, then there are $f^{-1},\bar f^{-1}\in \gF$ and $\hat f = [\bar f^{-1} \circ f^{-1}]\in \gF$.
Hence, we can express $x=\bar f^{-1}(z)$ and $y=\bar f^{-1}(w)$.
Then, $\hat f(f(x)) = \bar f^{-1} (x) = z$ and $\hat f(f(y)) = \bar f^{-1} (x) = w$.
Hence, by construction of $F$ we can conclude that $(f(x),f(y))\in F(R)$.

\bigskip

\noindent
{\bf  ($\Leftarrow$)}
Consider $(f(x),f(y))\in F(R)$, then there is $(\bar f(f(x)),\bar f(f(y)))\in R$ by construction of $F$.
Since $(\gF,\circ)$ is a group, then there is $\hat f = \bar f \circ f \in \gF$.
Hence, $(\hat f(x),\hat f(y))\in R$ and therefore $(x,y)\in F(R)$ by construction of $F$.

\end{proof}

\noindent
Next, we investigate the connection between the closures constructed.
In particular, we show that every fixed point of 

\begin{lemma}
\label{lemma:TransitiveTheoryFixedPoint}
Let $R=F(R)$, then $F(T(R))=T(R)$.
\end{lemma}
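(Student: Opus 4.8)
The goal is to show that if $R$ is a fixed point of the theory closure $F$, then the transitive closure $T(R)$ is also a fixed point of $F$. Since $F$ is increasing (Lemma \ref{lemma:TheoryClosure}), we automatically have $T(R)\subseteq F(T(R))$, so the only thing to prove is the reverse inclusion $F(T(R))\subseteq T(R)$.

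First I would take an arbitrary pair $(x,y)\in F(T(R))$ and unwind the two closures. By definition of $F$, there is some $f\in\gF$ with $(f(x),f(y))\in T(R)$. By definition of $T$, there is a finite chain $f(x)=s_1,s_2,\ldots,s_n=f(y)$ with $(s_j,s_{j+1})\in R$ for every $j\in\{1,\ldots,n-1\}$. The key idea is to transport this chain back through $f^{-1}$.

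The crucial step uses Lemma \ref{lemma:TheoryFixedPoint}: since $R=F(R)$, we have $(s_j,s_{j+1})\in R$ if and only if $(g(s_j),g(s_{j+1}))\in R$ for every $g\in\gF$. Applying this with $g=f^{-1}\in\gF$ (which exists since $(\gF,\circ)$ is a group), we obtain $(f^{-1}(s_j),f^{-1}(s_{j+1}))\in R$ for every $j$. Now the sequence $x=f^{-1}(s_1),f^{-1}(s_2),\ldots,f^{-1}(s_n)=y$ is a finite chain witnessing $(x,y)\in T(R)$. Hence $F(T(R))\subseteq T(R)$, and together with the increasing property this gives $F(T(R))=T(R)$.

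I do not expect a serious obstacle here; the only thing one must be careful about is that the \emph{equivalence} form of Lemma \ref{lemma:TheoryFixedPoint} (and not merely the one-directional consistency condition) is what licenses pushing $f^{-1}$ through each link of the transitive chain — this is precisely why that lemma was stated as an ``if and only if.''
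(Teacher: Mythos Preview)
Your proposal is correct and follows essentially the same route as the paper's own proof: both use that $F$ is increasing for the easy inclusion, then unwind $F$ and $T$ to get a chain $f(x)=s_1,\ldots,s_n=f(y)$ in $R$, apply $f^{-1}$ link by link (using $R=F(R)$ and the group structure), and conclude $(x,y)\in T(R)$. One small remark: only the forward direction of Lemma~\ref{lemma:TheoryFixedPoint} (applied with $g=f^{-1}$) is actually used here, so the ``if and only if'' is not strictly essential for this step --- though of course it is harmless to invoke it.
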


\begin{proof}
Since $F$ is increasing, then $T(R)\subseteq F(T(R))$.
Hence, we are left to show the opposite, i.e. $F(T(R)) \subseteq T(R)$.
Take $(x,y)\in F(T(R))$, then there is $f\in \gF$, such that $(f(x),f(y))\in T(R)$.
That is there is a sequence $f(x) = s_1,\ldots, s_n = f(y)$ such that
$$
(s_j,s_{j+1}) \in R \text{ for every } j\in \{1,\ldots,n-1\}.
$$
Given that $R=F(R)$ and $(\gF,\circ)$ is a group, then there is $f^{-1}\in \gF$ such that 
$$
(f^{-1}(s_j),f^{-1}(s_{j+1}) ) \in R \text{ for every } j\in \{1,\ldots,n-1\}.
$$
Hence, $(f^{-1}(s_1),f^{-1}(s_n)) = (x,y)\in T(R)$.

\end{proof}

\noindent
Next we define a \emph{revealed preference relation} denoted by $R_E$.
That is, $(x,y)\in R_E$ if $x\in C(B), \ y\in B$ for some $B\in \gB$ or $x=y$.
	The first part defines actually the revealed preference part of the relation, while the second part guarantees  reflexivity.
Let $R\preceq R'$ ($R'$ is an {\bf extension} of $R$) if $R\subseteq R'$ and $P(R)\subseteq P(R')$.  
Let us start with introducing the operationalizable version of the same definition.

\begin{lemma}[\cite{demuynck2009,freer2016representation}]
\label{lemma:Consistency}
Let $R\subseteq R'$. $R\preceq R'$ if and only if $P^{-1}(R)\cap R'=\emptyset$.
\end{lemma}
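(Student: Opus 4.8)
The claim is an ``if and only if'' about the relation $R \preceq R'$ (under the standing hypothesis $R \subseteq R'$), so I would prove the two directions separately. Throughout, recall the definitions: $R \preceq R'$ means $R \subseteq R'$ \emph{and} $P(R) \subseteq P(R')$, where $P(R) = \{(x,y) \in R : (y,x) \notin R\}$; and $P^{-1}(R) = \{(y,x) : (x,y) \in P(R)\}$ is the converse of the strict part. The point of the lemma is to replace the second inclusion $P(R) \subseteq P(R')$ — which talks about $R'$ containing things — by the more algorithm-friendly \emph{emptiness} condition $P^{-1}(R) \cap R' = \emptyset$, which only says that $R'$ \emph{avoids} a certain forbidden set. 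This is the form one wants when building $R'$ by adding pairs one at a time, since one just has to keep checking that no forbidden pair has been added.

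\textbf{($\Rightarrow$).} Assume $R \preceq R'$; I want $P^{-1}(R) \cap R' = \emptyset$. Suppose for contradiction that $(y,x) \in P^{-1}(R) \cap R'$. Then $(x,y) \in P(R)$, so by the hypothesis $P(R) \subseteq P(R')$ we get $(x,y) \in P(R')$, which by definition of $P(\cdot)$ means $(y,x) \notin R'$ — contradicting $(y,x) \in R'$. Hence the intersection is empty. This direction is essentially immediate from unwinding definitions.

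\textbf{($\Leftarrow$).} Assume $R \subseteq R'$ and $P^{-1}(R) \cap R' = \emptyset$; I must show $R \preceq R'$, i.e. (given $R \subseteq R'$ already) that $P(R) \subseteq P(R')$. Take $(x,y) \in P(R)$. Then $(x,y) \in R \subseteq R'$, so it remains to check $(y,x) \notin R'$. But $(x,y) \in P(R)$ gives $(y,x) \in P^{-1}(R)$, and since $P^{-1}(R) \cap R' = \emptyset$ we conclude $(y,x) \notin R'$. Therefore $(x,y) \in P(R')$, as desired.

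\textbf{Main obstacle.} There is essentially no obstacle here — the statement is a bookkeeping identity, and the only thing to be careful about is that the hypothesis $R \subseteq R'$ is genuinely used in the ($\Leftarrow$) direction (to get $(x,y) \in R'$ in the first place) and that the definition of $P(R')$ is applied correctly, namely that $(x,y) \in R'$ together with $(y,x) \notin R'$ is exactly what $(x,y) \in P(R')$ means. The conceptual content is entirely in recognizing that ``$R'$ respects the strictness of $R$'' is the same as ``$R'$ does not reverse any strict comparison of $R$,'' and the proof just records this.
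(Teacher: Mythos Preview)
Your proof is correct. The paper itself does not provide a proof of this lemma---it is stated with attribution to \cite{demuynck2009} and \cite{freer2016representation}---so there is nothing to compare against; your argument is the standard unwinding of the definitions and is exactly what one would expect.
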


\noindent
Given the extension nomenclature we can restate the conditions for the (weak) rationalization of the data.
In order to find a preference relation which generates the choice, we need to find $R^*$ such that $R_E\preceq R^*$.
This condition in particular guarantees that chosen points are maximal.
Let us note that we assume that choice correspondence contains \emph{all} best points in the budget and therefore, just requiring $R_E\subseteq R^*$ is not enough.
To guarantee that the binary relation $R^*$ is consistent with the theory we need to guarantee that $R^*=F(R^*)$.
To guarantee that the binary relation $R^*$ is transitive we need to guarantee that $R^*=T(R^*)$.
Hence, the rationalizability conditions can be restated as follows.
A data set is 
\begin{itemize}
	\item [--] \emph{weakly rationalizable} if there is a binary relation $R^*=F(R^*)$ such that $R_E\preceq R^*$;
	\item [--] \emph{rationalizable} if there is a binary relation $R^*=T(F(R^*))$ such that $R_E\preceq R^*$;
	\item [--] \emph{completely weakly rationalizable} if there is a complete binary relation $R^*=F(R^*)$ such that $R_E\preceq R^*$;
	\item [--] \emph{completely rationalizable} if there is a complete binary relation $R^*=T(F(R^*))$ such that $R_E\preceq R^*$.
\end{itemize}

\subsection{Proof of Proposition \ref{prop:WeakRationalization}}
\begin{proof}
{\bf $(\Rightarrow)$}
Let us proceed proving this statement by contradiction.
That is a data set is weakly rationalizable, though fails WAARP,
That is, there are $x_i\in C(B_i)$ and $x_i \in C(B_j)$ such that 
$$
f(x_i)\in B_j \text{ and } f^{-1}(x_j)\in B_i\setminus C(B_i).
$$
Since the data set is rationalizable and $f(x_i)\in B_j$, then $(x_j,f(x_i))\in R^*$.
Since $R^*$ is consistent with the theory and $(\gF,\circ)$ is a group, then $(f^{-1}(x_j),[f^{-1}\circ f](x_i)) = (f^{-1}(x_j),x_i)\in R^*$.
At the same time $f^{-1}(x_j)\in B_i\setminus C(B_i)$ implies that $(x_i,f^{-1}(x_j))\in P(R^*)$.
That is a contradiction.

\bigskip

\noindent
{\bf $(\Leftarrow)$}
Assume that WAARP is satisfied.
Let us start with showing that $R_E\preceq F(R_E)$.
One the contrary let us assume that $(z,w) \in P^{-1}(R_E)\cap F(R_E)$.
By construction there is $f\in \gF$ such that $(f(z),f(w))\in R_E$.
That is, $f(z) = x_j\in C(B_j)$ and $f(w)\in B_j$.
At the same time $(w,z)\in P(R_E)$ implies that $w=x_i\in C(B_i)$ and $f^{-1}(x_j)\in B_i\setminus C(B_i)$.
That is an immediate contradiction to WAARP.
Hence, $R_E\preceq F(R_E)$.
Lemma \ref{lemma:TheoryFixedPoint} implies that $F(R_E)$ is consistent with the theory and $R_E\preceq F(R_E)$ implies that $C(B)=\max(B, F(R_E))$.

\bigskip

\noindent
Let us conclude by providing a formal argument for the latter point.
Consider $x\in C(B)$, then $(x,y)\in R_E$ for every $y\in B$ and $(y,x)\in R_E$ only if $y\in C(B)$.
Since $R_E\preceq F(R_E)$, then $(x,y)\in R_E$ for every $y\in B$ since $R_E\subseteq F(R_E)$; and $(y,x)\in R_E$ only if $y\in C(B)$ since $P(R_E)\subseteq P(F(R_E))$.
\end{proof}

\subsection{Proof of Proposition \ref{prop:Rationalization}}
\begin{proof}
{\bf $(\Rightarrow)$}
We assume on the contrary that the data is rationalizable and there is a violation of SAARP.
That is there are sequences $x_1, \ldots, x_n$ such that $x_j\in C(B_j)$ for every $ j\in\{1,\ldots, n\}$  for some $B_1,\ldots, B_n\in \gB$, and $f_1,\ldots,f_{n-1}\in \gF$ such that
$$
f_{j}(x_{j+1}) \in B_j \text{ for every } j\in\{1,\ldots, n-1\},
$$
and 
$$ 
\left[\Comp\limits_{j=1}^{n-1} f_j\right]^{-1} (x_1) \in B_n\setminus C(B_n).
$$

Since the data set is rationalizable, there is a monotone preference relation $R^*$ transitive and consistent with the theory $\gF$ such that 
$$
(x,y) \in R \text{ for every } x\in C(B); \ y\in B \text{ for some } B\in \gB.
$$
Hence, $f_{j}(x_{j+1}) \in B_j$ implies that 
$$
(x_j,f_j(x_{j+1}))\in R^* \text{ for every } j\in {1,\ldots,n-1}.
$$
However, to show the relation between $x_1$ and $x_n$ we need to make sure that the the less preferred element in the pair $j$ is the more preferred element in the pair $j+1$.
Consider the first two pairs in this sequence, namely
$(x_1,f_1(x_2))$ and $(x_2,f_2(x_3))$.
Given that $R^*$ is consistent with the theory, then $(x_2,f_2(x_3))\in R^*$ implies that $(f_1(x_2),[f_1\circ f_2](x_3))\in R^*$.
Extending the same logic we can conclude that 
$$
(x_j,f_j(x_{j+1}))\in R^* \text{ implies } \left(\left[\Comp_{i=1}^{j-1} f_i \right] (x_j), \left[\Comp_{i=1}^{j} f_i \right] (x_{j+1}) \right) \in R^*
$$
for every $j\in \{2,\ldots, n-1\}$.

Then, by transitivity we can conclude that 
$$
\left(x_1,\left[\Comp_{i=1}^{n-1} f_i \right] (x_{n})\right) \in R^*.
$$
Given that $(\gF,\circ)$ is a group, there is an inverse $\left[\Comp_{i=1}^{n-1} f_i \right]^{-1}\in \gF$ and given that $R^*$ is consistent with $\gF$ we can conclude that 
$$
\left(\left[\Comp_{i=1}^{n-1} f_i \right]^{-1}(x_1), x_{n}\right) \in R^*.
$$
On the other hand the violation of AARP implies that 
$$
\left[\Comp\limits_{j=1}^{n-1} f_j\right]^{-1} (x_1) \in B_n\setminus C(B_n),
$$
then the fact that $R^*$ is a monotone preference relation that rationalizes the data set implies
$$ 
\left(x_{n},\left[\Comp_{i=1}^{n-1} f_i \right]^{-1}(x_1)\right) \in P(R^*).
$$
That is a direct contradiction given that by construction $(x,y) \in P(R^*)$ only if $(x,y)\in R^*$ and $(y,x)\notin R^*$, i.e. $P(R^*)$ denotes the strict (asymmetric) part of the preference relation $R^*$.

\bigskip

\noindent
{\bf $(\Leftarrow)$}
Since data set satisfies SAARP, then it satisfies WAARP.
Therefore, as we have proven above $R_E\preceq F(R_E)$.
Hence, to complete the proof (given Lemma \ref{lemma:TransitiveTheoryFixedPoint}) we need to show that $R_E\preceq T(F(R_E))$ if the data satisfies SAARP.
We proceed with the proof by contradiction.
That is, assume that there is $(x,y)\in P^{-1}(R_E)\cap T(F(R_E))$.
Then, there is a sequence of $x=s_1,\ldots, s_n=y$ such that $(s_j,s_{j+1})\in F(R_E)$ for every $j\in \{1,\ldots,n-1\}$. 
Since $(s_j,s_{j+1})\in F(R_E)$, then there is $f_j\in \gF$ such that $(f_j(s_j),f_j(s_{j+1}))\in R_E$.
Hence, we can restate the sequence that adds $(x,y)$ to $T(F(R_E))$ as follows.
There are sequences  $x=s_1,\ldots, s_n=y$  and $f_1,\ldots,f_{n-1}\in \gF$ such that 
$$
(f_j(s_j),f_j(s_{j+1})) \in R_E \text{ for every } j\in\{1,\ldots,n-1\}.
$$
Given the construction of the revealed preference relation we know that $f_j(s_j) = x_j\in B_j$ for some $B_j\in \gB$.
Moreover, $f_j(s_{j+1}) = [f_j\circ f^{-1}_{j+1}](x_{j+1}) \in B_j$, where $f^{-1}_{j+1}, [f_j\circ f^{-1}_{j+1}] \in \gF$ since $(\gF,\circ)$ is a group.
Denote by $\hat f_j = [f_j\circ f^{-1}_{j+1}]$ for every $j\in \{1,\ldots,n-2\}$ and $\hat f_{n-1}= f_{n-1}$.
Then, we can rewrite the sequence as 
$$
\hat f_j (x_{j+1})\in B_j \text{ for every } j\in \{1,\ldots,n-1\}.
$$
At the same time we know that $(y,x)\in P(R_E)$, that is
$$
y = s_n = x_n \in C(B_n) \text{ and } x=f^{-1}_1(x) \in B_n\setminus C(B_n).
$$
Let us compute the
\begin{align*}
\left[ \Comp_{j=1}^{n-1} \hat f_j \right]^{-1} = \left[ [f_1\circ f^{-1}_2]\circ [f_2\circ f^{-1}_3]\circ \ldots \circ[f_{n-2}\circ f^{-1}_{n-1}]\circ [f_{n-1}] \right]^{-1} = f_1^{-1}.
\end{align*}
Therefore, the following holds
$$
\left[ \Comp_{j=1}^{n-1} \hat f_j \right]^{-1}(x_1) = f_1^{-1}(x_1) \in B_n\setminus C(B_n)
$$
that is a direct contradiction of SAARP.

\end{proof}

\subsection{Proof of Proposition \ref{prop:CompleteWeakRationalization}}
Necessity of WAARP for complete weak rationalization obviously follows from the necessity of WAARP for the weak rationalization.
Hence, we are left to provide the sufficiency proof.
For the sufficiency proof we will need couple auxiliary properties of the closures and an auxiliary result.
A function $G:\gR\rightarrow \gR$ is said to be \emph{algebraic} if for any $R\in\gR$ and all $(x,y)\in G(R)$ there is a finite relation $R'\subseteq R$ such that $(x,y)\in G(R')$.
That is, to for any comparison in $G(R)$ there is a finite sub-relation to add this comparison. 
Denote by $N(R) = X\times X \setminus (R\cup R^{-1})$
A function $G:\gR\rightarrow \gR$ is said to be \emph{weakly expansive} if for any $R=G(R)$ such that $N(R)\ne \emptyset$, then there is a non-empty $S\subseteq N(R)$ such that $R\cup S\preceq F(R\cup S)$.
These two properties allow us to restate the result from \cite{demuynck2009}.

\begin{lemma}[\cite{demuynck2009} Extension Theorem]
\label{lemma:DemuynckExtensionThm}
Let $G:\gR\rightarrow \gR$ be a weakly expansive, algebraic closure and let $R_E$ be a revealed preference relation.
There is a complete binary relation $R^*=G(R^*)$ such that $R_E\preceq R^*$ if and only if $R_E\preceq F(R_E)$.
\end{lemma}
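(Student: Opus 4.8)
The plan is to prove the two implications separately, since the forward (``only if'') direction is immediate while essentially all the content lies in the backward direction, which I would obtain by a Zorn's-lemma maximality argument. I read the right-hand condition and the definition of weak expansiveness with $G$ in place of $F$, so that the statement specializes correctly to $G=F$ and to the transitive variant used later. For necessity, given a complete $R^{*}$ with $G(R^{*})=R^{*}$ and $R_{E}\preceq R^{*}$, I would note that $R_{E}\subseteq G(R_{E})$ since $G$ is increasing, so by Lemma~\ref{lemma:Consistency} it suffices to show $P^{-1}(R_{E})\cap G(R_{E})=\emptyset$; if $(x,y)$ were in this intersection, then $(y,x)\in P(R_{E})\subseteq P(R^{*})$, so $(x,y)\notin R^{*}$, while monotonicity of $G$ together with $R_{E}\subseteq R^{*}$ gives $G(R_{E})\subseteq G(R^{*})=R^{*}$ and hence $(x,y)\in R^{*}$ --- a contradiction. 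This argument uses only that $G$ is increasing and monotone, not weak expansiveness or algebraicity.

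For sufficiency I would assume $R_{E}\preceq G(R_{E})$ and work with the set $\mathcal{P}$ of binary relations $R$ satisfying $G(R)=R$ and $R_{E}\preceq R$, partially ordered by $\preceq$. It is nonempty because $G(R_{E})$ is $G$-closed by idempotence and lies in $\mathcal{P}$ by hypothesis. The key step is to show that every $\preceq$-chain $\{R_{\alpha}\}$ in $\mathcal{P}$ has an upper bound $\bar R=\bigcup_{\alpha}R_{\alpha}$ inside $\mathcal{P}$, and this is where algebraicity of $G$ enters: $\bar R\subseteq G(\bar R)$ because $G$ is increasing, and if $(x,y)\in G(\bar R)$ then algebraicity supplies a finite $R'\subseteq\bar R$ with $(x,y)\in G(R')$, so --- the chain being linearly ordered and hence $\subseteq$-directed --- $R'\subseteq R_{\alpha}$ for some $\alpha$, whence $(x,y)\in G(R_{\alpha})=R_{\alpha}\subseteq\bar R$ by monotonicity. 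I would then check that $R_{\alpha}\preceq\bar R$ for every $\alpha$ (any reversal in $\bar R$ would lie in some $R_{\beta}$, and $\preceq$-comparability of $R_{\alpha}$ with $R_{\beta}$ yields a contradiction), so $R_{E}\preceq\bar R$ by transitivity of $\preceq$ and $\bar R\in\mathcal{P}$.

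Zorn's lemma then gives a $\preceq$-maximal $R^{*}\in\mathcal{P}$, and the last task is to show $R^{*}$ is complete. If it were not, $N(R^{*})\neq\emptyset$, and weak expansiveness would provide a nonempty $S\subseteq N(R^{*})$ with $R^{*}\cup S\preceq G(R^{*}\cup S)$; since $S$ meets neither $R^{*}$ nor $(R^{*})^{-1}$, one gets $R^{*}\preceq R^{*}\cup S$ and hence $R^{*}\preceq G(R^{*}\cup S)$ by transitivity of $\preceq$, while $G(R^{*}\cup S)\supseteq S$ strictly contains $R^{*}$ and is $G$-closed by idempotence. Thus $G(R^{*}\cup S)$ would be an element of $\mathcal{P}$ strictly $\preceq$-above $R^{*}$, contradicting maximality. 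So $R^{*}$ is a complete relation with $G(R^{*})=R^{*}$ and $R_{E}\preceq R^{*}$, which is what was wanted.

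I expect the main obstacle to be the chain step --- proving that the union of a $\preceq$-chain of $G$-closed relations is itself $G$-closed and lies $\preceq$-above each of its members. Algebraicity is exactly the hypothesis that makes this go through, by reducing membership in $G(\bar R)$ to a finite subrelation already realized at some stage of the chain. The other point requiring care is to transport $\preceq$, not merely $\subseteq$, through the union and through the weak-expansiveness step; this is precisely why the added pairs $S$ must be taken from $N(R^{*})$, as that prevents creating any new reversal of a pair already strictly ranked.
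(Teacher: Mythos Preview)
Your proposal is correct. Note, however, that the paper does not supply its own proof of this lemma: it is stated as a restatement of the extension theorem of \cite{demuynck2009} and invoked as a black box in the proofs of Propositions~\ref{prop:CompleteWeakRationalization} and~\ref{prop:CompleteRationalization}. Your argument is the standard Zorn's-lemma route one finds in that literature: algebraicity is used exactly to push $G$-closedness through unions of chains, and weak expansiveness is used exactly to contradict maximality of a non-complete fixed point. You have also correctly read the statement with $G$ in place of $F$ on the right-hand side (the paper's ``$R_E\preceq F(R_E)$'' is a typographical slip; the intended condition is $R_E\preceq G(R_E)$, as is clear from how the lemma is applied with $G=F$ and $G=T\circ\bar F$).
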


\noindent
Lemma \ref{lemma:DemuynckExtensionThm} already guarantees us the sufficiency proof once we manage to show that $F$ is weakly expansive and algebraic.
Recall that the condition of $R_E\preceq F(R_E)$ is equivalent to WAARP, and existence of complete binary relation $R^*=F(R^*)$ is equivalent to the existence of complete and consistent with the theory and guarantees that the observed choices are the best in the given budget set.
Hence, to complete the proof we show that $F$ is algebraic and weakly expansive.

\begin{lemma}
\label{lemma:ExpansiveClosure}
A theory closure $F$ is algebraic and weakly expansive.
\end{lemma}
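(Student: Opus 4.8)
The plan is to read both properties directly off the definition of the theory closure $F$ together with the group axioms for $(\gF,\circ)$, so the argument will be short. For the \emph{algebraic} property I would argue as follows: if $(x,y)\in F(R)$ then by definition there is a witness $f\in\gF$ with $(f(x),f(y))\in R$, and taking the finite subrelation $R'=\{(f(x),f(y))\}\subseteq R$ the very same $f$ shows $(x,y)\in F(R')$. Nothing further is needed.

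For \emph{weak expansiveness}, suppose $R=F(R)$ with $N(R)\neq\emptyset$ and fix an incomparable pair $(a,b)\in N(R)$, i.e. $(a,b)\notin R$ and $(b,a)\notin R$. I would take $S$ to be the orbit of this pair under the theory,
\[
S=\{(f(a),f(b)):f\in\gF\},
\]
which is non-empty since $I\in\gF$. The first thing to verify is $S\subseteq N(R)$: if $(f(a),f(b))\in R=F(R)$ for some $f\in\gF$, then applying the equivalence of Lemma~\ref{lemma:TheoryFixedPoint} with the function $f^{-1}\in\gF$ yields $(a,b)=(f^{-1}(f(a)),f^{-1}(f(b)))\in F(R)=R$, contradicting $(a,b)\in N(R)$; symmetrically $(f(b),f(a))\in R$ would force $(b,a)\in R$. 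Hence $S\cap(R\cup R^{-1})=\emptyset$.

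It then remains to observe that $R\cup S$ is already a fixed point of $F$, which makes the extension statement automatic. Since $F$ is increasing (Lemma~\ref{lemma:TheoryClosure}), $R\cup S\subseteq F(R\cup S)$. Conversely, if $(x,y)\in F(R\cup S)$ there is $g\in\gF$ with $(g(x),g(y))\in R\cup S$: if that pair lies in $R$ then $(x,y)\in F(R)=R$, and if it equals $(f(a),f(b))\in S$ then $(x,y)=(h(a),h(b))\in S$ with $h:=g^{-1}\circ f\in\gF$ by the group structure. Thus $F(R\cup S)=R\cup S$, and so $R\cup S\preceq F(R\cup S)$ holds trivially (equivalently, by Lemma~\ref{lemma:Consistency}, because $P^{-1}(R\cup S)\cap(R\cup S)=\emptyset$ by asymmetry of the strict part). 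I do not expect a real obstacle here; the one step that deserves care is $S\subseteq N(R)$, where I rely on the full equivalence in Lemma~\ref{lemma:TheoryFixedPoint} rather than merely the consistency implication in order to pull an assumed comparison of $(f(a),f(b))$ back to a comparison of $(a,b)$, and the mildly surprising point is that closing $R$ under a single orbit already produces an exact fixed point rather than a proper extension.
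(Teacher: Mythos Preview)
Your argument is correct. The algebraic part is identical to the paper's. For weak expansiveness, however, you take a genuinely different route: the paper adds a \emph{single} incomparable pair, setting $R'=R\cup\{(x,y)\}$, and then argues by contradiction that $R'\preceq F(R')$; you instead add the full $\gF$--orbit $S=\{(f(a),f(b)):f\in\gF\}$ and check directly that $R\cup S$ is already an $F$--fixed point, so the extension relation is automatic.

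Your choice buys more than tidiness. In the paper's contradiction argument the case split is on whether the offending pair $(w,z)\in P(R')$ lies in $R$ or equals the newly added $(x,y)$; the treatment of the second case is terse and in fact does not cover the situation where some $f\in\gF$ swaps $x$ and $y$ (i.e.\ $f(x)=y$, $f(y)=x$). In that situation $(y,x)\in F(R')$ via $f$ while $(x,y)\in P(R')$, so the singleton $S=\{(x,y)\}$ does \emph{not} satisfy $R\cup S\preceq F(R\cup S)$. Your orbit construction absorbs such involutions automatically, since both $(x,y)$ and $(y,x)$ then belong to $S$ and no strict pair is created between them. So the two approaches are not merely cosmetic variants: yours is the one that goes through cleanly without side conditions on $\gF$. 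The one place you should keep explicit is the appeal to Lemma~\ref{lemma:TheoryFixedPoint} (the two--sided equivalence, not just consistency) when showing $S\subseteq N(R)$; you already flag this, and it is indeed the crux.
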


\begin{proof}
{\bf $F$ is algebraic}
\\
Consider a relation $R$ and an element $(x,y)\in F(R)$, then there is $f\in \gF$ such that $(f(x),f(y))\in R$.
Let $D=\{f(x),f(y)\}$ and let $R'=R\cap(D\times D)$.
Then, $(x,y)\in F(R')$ and $R'$ is finite by definition.

\bigskip

\noindent
{\bf $F$ is weakly expansive}
\\
Consider $R=F(R)$ and $(x,y)\in N(R)$.
Let $R' = R\cup\{(x,y)\}$ and let us show that $R'\preceq F(R')$.
On the contrary assume that $(z,w)\in F(R')\cup P^{-1}(R')$.
Let us start from showing that $(z,w)=(f(x),f(y))$ for some $f\in \gF$.
On the contrary assume that there are $(x,y)\neq (x',y')\in R'$, such that $z=f(x')$ and $w=f(y')$.
Then, $(z,w)\in R$ and therefore, $(w,z)\notin P(R')$ since $P(R')$ is the asymmetric (strict) part of $R'$.

Next, we are going to rely on the fact that $(z,w)=(f(x),f(y))$.
Since $(w,z)\in P(R') \subseteq R\cup\{(x,y)\}$, then either (i) $f=I$ and $(y,x)\in R$ that contradicts the fact that $(x,y)\in N(R)$, or $(f(y),f(x))\in R$.
The latter statement (given that $(\gF,\circ)$ is a group) implies that there is $f^{-1}\in \gF$, and therefore $([f^{-1}\circ f(y)],[f^{-1}\circ f](x)) = (y,x) \in R$ that is a contradiction to $(x,y)\in R$.

\end{proof}

\subsection{Proof of Proposition \ref{prop:CompleteRationalization}}
Since we have undergone the significant change to the notion of the theory and notion of consistency. 
We need to modify the notion of theory closure to account for these changes.
Let $\bar F:\gR\rightarrow \gR$ be an \emph{ordered theory closure}, if $(x,y)\in \bar F(R)$ if there are $\bar f \leq f\in \gF$ such that $(\bar f(x),f(y))\in R$.
Hence, to proceed we need to follow the scheme that to the large extent repeats the proofs provided before.
\begin{enumerate}
	\item $T(\bar F):\gR\rightarrow \gR$ is an algebraic closure
	\item $T(\bar F(R))$ is transitive and consistent with the ordered theory
	\item $T(\bar F):\gR\rightarrow \gR$ is weakly expansive.
	\item SAARP (on the regular data set) is equivalent to $R_E\preceq T(\bar F(R_E))$.
\end{enumerate}

\bigskip

\noindent
{\bf (1) $T(\bar F):\gR\rightarrow \gR$ is a closure}

\noindent
Let us note that a composition of closures is a closure as well.
Hence, given that we already shown that $T$ is a closure (see Lemma \ref{lemma:TransitiveClosure}), we are left to show that $\bar F$ is a closure to complete this part.

\begin{lemma}
\label{lemma:AlgebraicOrderedClosure}
$\bar F:\gR\rightarrow \gR$ is an algebraic closure.
\end{lemma}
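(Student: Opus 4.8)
The plan is to verify the four conditions in the definition of an algebraic closure — increasing, monotone, idempotent, and algebraic — following closely the templates of the proofs of Lemmas~\ref{lemma:TheoryClosure} and~\ref{lemma:ExpansiveClosure}. The genuinely new ingredient is that the ordered structure of $(\gF,\circ,\ge)$ must be invoked in the idempotency step; everything else is a routine adaptation.

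First I would dispatch the easy properties. For \emph{increasing}: given $(x,y)\in R$, take $\bar f=f=I$; reflexivity of $\ge$ gives $\bar f\le f$ and $(\bar f(x),f(y))=(x,y)\in R$, so $(x,y)\in\bar F(R)$. For \emph{monotone}: if $R\subseteq R'$ and $(x,y)\in\bar F(R)$, the witnessing pair $\bar f\le f$ satisfies $(\bar f(x),f(y))\in R\subseteq R'$, hence $(x,y)\in\bar F(R')$. For \emph{algebraic}: given $(x,y)\in\bar F(R)$ with witness $\bar f\le f$ and $(\bar f(x),f(y))\in R$, put $D=\{\bar f(x),f(y)\}$ and $R'=R\cap(D\times D)$; then $R'$ is finite and the same pair witnesses $(x,y)\in\bar F(R')$.

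The substantive step is \emph{idempotency}. The inclusion $\bar F(R)\subseteq\bar F(\bar F(R))$ is immediate from increasing, so I would concentrate on $\bar F(\bar F(R))\subseteq\bar F(R)$. Take $(x,y)\in\bar F(\bar F(R))$: there exist $\bar f\le f$ in $\gF$ with $(\bar f(x),f(y))\in\bar F(R)$, and then $\bar g\le g$ in $\gF$ with $(\bar g(\bar f(x)),g(f(y)))\in R$. Set $\bar h=\bar g\circ\bar f$ and $h=g\circ f$, which lie in $\gF$ since $\gF$ is closed under composition. It remains to check $\bar h\le h$: applying the right-order axiom to $\bar g\le g$ gives $\bar g\circ\bar f\le g\circ\bar f$, applying the left-order axiom to $\bar f\le f$ gives $g\circ\bar f\le g\circ f$, and transitivity of $\ge$ yields $\bar h=\bar g\circ\bar f\le g\circ f=h$. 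Since $(\bar h(x),h(y))=(\bar g(\bar f(x)),g(f(y)))\in R$, we conclude $(x,y)\in\bar F(R)$.

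The only obstacle is this idempotency argument, and it is a mild one: the point to watch is that, because $\circ$ is not assumed commutative, one must chain \emph{both} the left- and the right-monotonicity of composition with respect to $\ge$ — a one-sided ordering assumption would not link the two inequalities. Once this is noted, the lemma reduces entirely to the bookkeeping already carried out for $F$ in Lemmas~\ref{lemma:TheoryClosure} and~\ref{lemma:ExpansiveClosure}.
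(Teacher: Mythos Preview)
Your proof is correct and follows essentially the same route as the paper's: the four properties are checked in the same way, with the idempotency step hinging on the composition inequality $\bar g\circ\bar f\le g\circ f$ derived from the two-sided order compatibility of $(\gF,\circ,\ge)$. The only difference is cosmetic: the paper records that composition inequality separately as an auxiliary lemma (Lemma~\ref{lemma:Algebra}(i)) and then cites it, whereas you inline the two-step chaining argument directly.
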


\begin{proof}
{\bf $\bar F$ is increasing.}
\\
Since $I\in\gF$, then $(x,y)\in R$ implies that $(x,y)\in \bar F(R)$.

\bigskip

\noindent
{\bf $\bar F$ is monotone.}
\\
Let $R\subseteq R'$ and assume on the contrary that there is $(x,y)\in \bar F(R)$ and $(x,y)\notin \bar F(R')$.
Since $(x,y)\in \bar F(R)$, then there are $\bar f\le f\in \gF$ such that $(\bar f(x),f(y))\in R\subseteq R'$.
Latter implies that$ (\bar f(x),f(y))\in R'$ and therefore $(x,y)\in \bar F(R')$.

\bigskip

\noindent
{\bf $\bar F$ is idempotent.}
\\
Since $\bar F$ is increasing and monotone we already know that $\bar F(R)\subseteq \bar F(\bar F(R))$.
Hence, we are left to show that $\bar F(\bar F(R)) \subseteq \bar F(R)$.
Consider $(x,y)\in \bar F(\bar F(R))$, then there are $\bar f\leq f\in \gF$ such that $(\bar f(x),f(y))\in  \bar F(R)$.
Then, there are $\bar f'\leq f' \in \gF$ such that $([\bar f'\circ \bar f](x),[f'\circ f](y))\in R$.
Recall that $\gF$ is a group, therefore, $[\bar f'\circ \bar f], [f'\circ f]\in \gF$.
Moreover, $\gF$ is an ordered group, and therefore, $[\bar f'\circ \bar f]\leq [f'\circ f]$.
Then, there are $\bar f^* = [\bar f'\circ \bar f] \leq  [f'\circ f] = f^* $ such that $(\bar f^*(x),f^*(y))\in R$ and therefore, $(x,y)\in \bar F(R)$.

\bigskip

\noindent
{\bf $\bar F$ is algebraic}
\\
Consider $(x,y)\in \bar F(R)$, then there are $\bar f\le f\in \gF$ such that $(\bar f(x),f(y))\in R$.
Hence, the finite relation can be $D=\{(\bar f(x),f(y))\} \subseteq R$ and $(x,y)\in \bar F(D)$.

\end{proof}

\noindent
Let us note that a finite composition of algebraic closures is also algebraic closure.
Even though, this is a rather straight-forward observation, we provide the formal proof in Lemma \ref{lemma:Closures}.
Hence, knowing that $T$ and $\bar F$ are both algebraic closures, we conclude that $T\circ \bar F$ is an algebraic closure as well.

\bigskip

\noindent
{\bf (2) $T(\bar F(R))$ is transitive and consistent with the ordered theory}

\begin{lemma}
\label{lemma:OrderedConsistentFixedPoint}
$R=\bar F(R)$ is consistent with ordered theory.
\end{lemma}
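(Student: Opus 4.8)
The plan is to show the two directions of the biconditional that defines consistency with the ordered theory, namely that $R = \bar F(R)$ implies that $(x,y) \in R$ if and only if $(f'(x), f(y)) \in R$ for every $f' \geq f \in \gF$. This parallels the proof of Lemma~\ref{lemma:TheoryFixedPoint}, but now carrying the order relation through the group operation using the left- and right-monotonicity axioms of the ordered group.

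For the direction that says $(x,y) \in R$ implies $(f'(x), f(y)) \in R$ for $f' \geq f$: since $R = \bar F(R)$, we have $\bar F(R) \subseteq R$, so it suffices to show $(f'(x), f(y)) \in \bar F(R)$. By definition of $\bar F$, I need to produce $\bar g \leq g \in \gF$ with $(\bar g(f'(x)), g(f(y))) \in R$. The natural candidates are $\bar g = g = I$? No --- that would only give the trivial statement. Instead, the idea is to use the inverses: since $(x,y) \in R = \bar F(R)$, and $\bar F$ is increasing, $(x,y) \in \bar F(R)$ witnessed by $\bar f' = f = I$; but I actually want to run it backwards. The cleaner route: I want $(f'(x), f(y)) \in \bar F(R)$, so I seek $\bar g \leq g$ with $(\bar g \circ f')(x)$ and $(g \circ f)(y)$ forming a pair in $R$. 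Take $\bar g = (f')^{-1}$ and $g = f^{-1}$. Then $\bar g \circ f' = I = g \circ f$, so the pair is $(x, y) \in R$ as desired --- provided I can verify $\bar g \leq g$, i.e.\ $(f')^{-1} \leq f^{-1}$. This is where the main obstacle lies: I need that $f' \geq f$ implies $(f')^{-1} \leq f^{-1}$. In an ordered group this follows by multiplying $f' \geq f$ on the left by $(f')^{-1}$ and on the right by $f^{-1}$: from $f' \geq f$ we get $(f')^{-1} \circ f' \circ f^{-1} \geq (f')^{-1} \circ f \circ f^{-1}$, i.e.\ $f^{-1} \geq (f')^{-1}$. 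So I should first record this order-reversal property of inverses as a preliminary observation (either inline or as a tiny sub-lemma), and then the argument goes through.

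For the converse direction, I assume $(f'(x), f(y)) \in R$ for every $f' \geq f \in \gF$ and want $(x,y) \in R$; this is immediate by taking $f' = f = I$ (which is legal since $I \geq I$), giving $(x,y) = (I(x), I(y)) \in R$ directly. Actually the statement to be proved is slightly subtler: I must be careful about which quantifier structure Lemma~\ref{lemma:OrderedConsistentFixedPoint} is asserting --- reading it as ``$R = \bar F(R)$ is consistent with the ordered theory,'' the content is exactly the forward implication $(x,y) \in R \Rightarrow (f'(x),f(y)) \in R$ for all $f' \geq f$, so the bulk of the work is the first paragraph's argument, and I would present the inverse-monotonicity fact first, then deduce $\bar F(R) \ni (f'(x), f(y))$ via the witnesses $(f')^{-1} \leq f^{-1}$, and conclude using $\bar F(R) = R$.

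\begin{proof}
We first record that in an ordered group, $f' \geq f$ implies $(f')^{-1} \leq f^{-1}$. Indeed, applying left-monotonicity with $f'' = (f')^{-1}$ to $f' \geq f$ gives $(f')^{-1} \circ f' \geq (f')^{-1} \circ f$, that is $I \geq (f')^{-1} \circ f$; then applying right-monotonicity with $f'' = f^{-1}$ gives $I \circ f^{-1} \geq (f')^{-1} \circ f \circ f^{-1}$, that is $f^{-1} \geq (f')^{-1}$.

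Now let $R = \bar F(R)$ and suppose $(x,y) \in R$. Fix $f' \geq f$ in $\gF$. Since $(\gF, \circ)$ is a group, $(f')^{-1}, f^{-1} \in \gF$, and by the observation above $(f')^{-1} \leq f^{-1}$. Moreover $\big((f')^{-1}(f'(x)), f^{-1}(f(y))\big) = (x,y) \in R$. Hence, taking $\bar f = (f')^{-1} \le f^{-1} = f$ in the definition of $\bar F$, we obtain $(f'(x), f(y)) \in \bar F(R)$. Since $R = \bar F(R)$, this yields $(f'(x), f(y)) \in R$. Thus $R$ is consistent with the ordered theory.
\end{proof}
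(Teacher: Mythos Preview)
Your proof is correct and follows essentially the same approach as the paper's: both arguments hinge on the order-reversal of inverses in an ordered group (which the paper records separately as Lemma~\ref{lemma:Algebra}(ii), while you prove it inline) and then use $(f')^{-1} \le f^{-1}$ as the witnesses in the definition of $\bar F$ to conclude $(f'(x),f(y)) \in \bar F(R) = R$. The only cosmetic difference is that the paper phrases it as a proof by contradiction whereas you argue directly; also, in your final sentence the symbol $f$ is reused for the witness in the definition of $\bar F$, clashing with the $f$ already fixed---rename it (say $g$) to avoid ambiguity.
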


\begin{proof}
Suppose on the contrary that $R=\bar F(R)$, but $R$ is not consistent with the theory. 
Hence, there are $\bar f\geq f\in \gF$ such that $(\bar f(x),f(y))\notin R$.
Let $z = \bar f(x)$ and $w = f(y)$, then (given that $(\gF,\circ)$ is a group), there are $\bar f^{-1}, f^{-1} \in \gF$ such that $\bar f^{-1}\le f^{-1}$ (given that $(\gF,\circ,\ge)$ is an ordered group).
Hence, $(\bar f^{-1}(z),f^{-1}(w))\in R=\bar F(R)$ and $\bar f^{-1}\le f^{-1}$ implies that $(z,w)\in \bar F(R)=R$.
The latter observation is equivalent to $(\bar f(x),f(y))\in R$, that is a contradiction.

\end{proof}

\begin{lemma}
\label{lemma:OrderedClosureTransitivity}
Let $R=F(R)$, then $T(R) = \bar F(T(R))$
\end{lemma}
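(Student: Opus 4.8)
The plan is to show $T(R) = \bar F(T(R))$ whenever $R = F(R)$, mirroring the structure of the proof of Lemma~\ref{lemma:TransitiveTheoryFixedPoint} but now using the ordered theory closure $\bar F$ in place of $F$. One inclusion is immediate: since $\bar F$ is increasing (Lemma~\ref{lemma:AlgebraicOrderedClosure}), we have $T(R) \subseteq \bar F(T(R))$. So the real content is the reverse inclusion $\bar F(T(R)) \subseteq T(R)$.

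\noindent
To prove $\bar F(T(R)) \subseteq T(R)$, take $(x,y) \in \bar F(T(R))$. By definition of $\bar F$ there are $\bar f \le f \in \gF$ with $(\bar f(x), f(y)) \in T(R)$, i.e.\ a finite chain $\bar f(x) = s_1, s_2, \ldots, s_n = f(y)$ with $(s_j, s_{j+1}) \in R$ for each $j$. The key point I would exploit is that $R = F(R)$, so by Lemma~\ref{lemma:TheoryFixedPoint} the relation $R$ is invariant under applying any single $g \in \gF$ to both coordinates of a pair. The trick is to insert, between $\bar f(x)$ and $f(y)$, the ``untransformed'' endpoints $x$ and $y$ and bridge the gaps. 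Concretely, I expect to use: $(x, \bar f^{-1} (s_2)) \in R$ — wait, more carefully, since $\bar f \le f$ I would want to interpolate using the identity that $\bar f(x) \ge$ something; the cleanest route is to note $(\bar f(x), f(y)) \in T(R)$ together with the relations $x = \bar f^{-1}(\bar f(x))$ and $y = f^{-1}(f(y))$ obtained via group inverses, plus the single extra step coming from $\bar f \le f$. Using that $R$ is consistent with the ordered theory (which holds since $R = F(R)$ means $R$ is consistent with the plain theory, and one checks $\bar f^{-1} \le f^{-1}$ in the ordered group, hence the ordered-consistency pair $(\bar f^{-1}(\bar f(x)), f^{-1}(f(y))) = (x, y)$ can be reached), I would conclude $(x,y)$ lies in a chain of $R$-related pairs, hence $(x,y) \in T(R)$.

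\noindent
More precisely, the chain I would build is: apply $\bar f^{-1}$ coordinate-wise to the first link to get $(x, \bar f^{-1}(s_2)) \in R$ (using $R = F(R)$ and Lemma~\ref{lemma:TheoryFixedPoint}); then one may need to re-scale the intermediate steps. A cleaner alternative, which I would actually pursue: since $(\bar f(x), f(y)) \in T(R)$, and since for $R = F(R)$ one has $g$-invariance, first show $(\bar f(x), \bar f(y)) \in T(R)$ would require $f = \bar f$, which need not hold — so instead observe that $\bar f \le f$ gives, via ordered consistency of $R$ applied to the last relevant pair, a way to replace $f(y)$ by $\bar f(y)$ at the cost of one more $R$-step (namely $(f(y), \bar f(y)) \in R$ follows from $(y,y) \in R$ — reflexivity — together with $\bar f \le f$ and ordered consistency, since $(y,y)\in R$ implies $(f(y), \bar f(y)) \in R$). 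Then the chain $\bar f(x) = s_1, \ldots, s_n = f(y), \bar f(y)$ is an $R$-chain from $\bar f(x)$ to $\bar f(y)$, so $(\bar f(x), \bar f(y)) \in T(R)$; applying $\bar f^{-1}$ coordinate-wise to each link (legal because $R = F(R)$) yields an $R$-chain from $x$ to $y$, so $(x,y) \in T(R)$.

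\noindent
The main obstacle I anticipate is handling the asymmetry between $\bar f$ and $f$ correctly: the ordered closure $\bar F$ mixes a ``smaller'' transformation on the left coordinate with a ``larger'' one on the right, and the group/ordered-group axioms must be invoked just so — in particular deducing $\bar f^{-1} \le f^{-1}$ from $\bar f \le f$ (which uses the centralization-around-one type reasoning, i.e.\ multiplying the inequality $\bar f \le f$ on the left by $\bar f^{-1}$ and on the right by $f^{-1}$), and correctly using reflexivity of $R$ plus ordered consistency to produce the bridging pair $(f(y), \bar f(y)) \in R$. Once these two small algebraic facts are in place, the rest is the same chain-manipulation argument as in Lemma~\ref{lemma:TransitiveTheoryFixedPoint}, and the increasing property of $\bar F$ supplies the easy inclusion.
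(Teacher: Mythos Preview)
Your final approach is correct and is essentially the paper's argument with only a cosmetic variation. Both proofs get the easy inclusion from $\bar F$ being increasing, then for $\bar F(T(R))\subseteq T(R)$ take an $R$-chain from $\bar f(x)$ to $f(y)$ and convert it into one from $x$ to $y$ using one application of ordered consistency to handle the $\bar f\neq f$ mismatch and plain consistency ($R=F(R)$) for the remaining links; the paper absorbs the mismatch at the \emph{first} link (applying $\bar f^{-1}$ there and $f^{-1}$ to the rest, invoking $\bar f^{-1}\ge f^{-1}$), whereas you append the bridging link $(f(y),\bar f(y))\in R$ at the \emph{end} (from reflexivity and ordered consistency) and then apply $\bar f^{-1}$ uniformly --- which, incidentally, means your route does not actually need the inverse-ordering fact you flagged as an obstacle.
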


\begin{proof}
Since $\bar F$ is increasing then $T(R)\subseteq \bar F(T(R))$ and we are left to show that $\bar F(T(R))\subseteq T(R)$.
Assume $(x,y)\in \bar F(T(R))$, then there are $\bar f\le f\in \gF$ such that $(\bar f(x),f(y))\in T(R)$.
Then, there is a sequence $\bar f(x) = s_1,\ldots, s_n=f(y)$ such that
$$
(s_j,s_{j+1})\in R \text{ for every } j\in \{1,\ldots, n-1\}.
$$

\noindent
Since $(\gF,\circ,\ge)$ is an ordered group, then there are $\bar f^{-1},f^{-1}\in \gF$ and $f^{-1}\ge f^{-1}$.
Since $R=F(R)$ is consistent with ordered theory $\gF$ (see Lemma \ref{lemma:OrderedConsistentFixedPoint}) and $(\bar f(x),s_2)\in R$, then
$$
([\bar f^{-1}\circ \bar f](x),f^{-1}(s_1)) = (x,f^{-1}(s_1))\in R.
$$

\noindent
Moreover, for $2\le j \le n-1$ consistency of $R$ implies $(f^{-1}(s_{j}),f^{-1}(s_{j+1}))\in R$.
Finally, $f^{-1}(s_n) = [f^{-1}\circ f](y)$, therefore, $(x,y)\in T(R)$.

\end{proof}

\bigskip

\noindent
{\bf (3) $T(\bar F):\gR\rightarrow \gR$ is weakly expansive.}

\begin{lemma}
\label{lemma:TransitiveOrderedWeaklyExpansiveClosure}
$T(\bar F): \gR \rightarrow \gR$ is weakly expansive.
\end{lemma}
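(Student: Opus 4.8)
I want to show that $T(\bar F)$ is weakly expansive: whenever $R = T(\bar F(R))$ and $N(R) \neq \emptyset$, there is a nonempty $S \subseteq N(R)$ with $R \cup S \preceq T(\bar F(R \cup S))$. The natural candidate, mimicking the proof of Lemma \ref{lemma:ExpansiveClosure}, is to pick a single incomparable pair $(x,y) \in N(R)$ and take $S$ to be the set of all transformations of it forced by the ordered theory, namely $S = \{(\bar f(x), f(y)) : \bar f \le f \in \gF\}$ (restricted to $N(R)$, i.e.\ we only really add the pairs not already in $R \cup R^{-1}$). This $S$ is by construction closed under the ordered-theory operation in the relevant sense, and by regularity of the ordered group it is also "closed going down". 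The claim to verify is then that $R' := R \cup S$ satisfies $R' \preceq T(\bar F(R'))$, i.e.\ (using Lemma \ref{lemma:Consistency}) that $P^{-1}(R') \cap T(\bar F(R')) = \emptyset$.

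**Key steps, in order.** First I would compute $\bar F(R')$ and $T(\bar F(R'))$ explicitly enough to control them: since $R = T(\bar F(R)) = \bar F(R)$ already (it is a fixed point of both $T$ and, via $F \le \bar F$ on fixed points, of $\bar F$), applying $\bar F$ to $R \cup S$ adds only ordered-theory images of pairs in $S$, and by the choice of $S$ these are again in $S$ (or in $R$ if they happen to become comparable there, contradicting $(x,y)\in N(R)$ as in the old proof). So $\bar F(R') \subseteq R \cup S$. Second, I would take transitive closure: a chain in $\bar F(R') \subseteq R \cup S$ that produces a \emph{new} pair must use at least one edge from $S$; I argue that a chain mixing $R$-edges and the single "direction" of $S$-edges cannot close up into a pair $(w,z)$ whose reverse $(z,w)$ lies in $P(R')$. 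The point is that every $S$-edge is an ordered image $(\bar f(x), f(y))$ of the \emph{one} pair $(x,y)$, so a cycle through $S$-edges and $R$-edges would, after pulling back through inverses (using that $\gF$ is an ordered group and $R$ is consistent with the ordered theory, Lemma \ref{lemma:OrderedConsistentFixedPoint}), force $(y,x) \in R$ or force two $S$-edges to compose into something that, pulled back, gives $(y,x)\in R$ — contradicting $(x,y)\in N(R)$. Third, I would invoke regularity only where it is genuinely needed: to ensure the data-set side behaves, but for the closure-level statement the ordered-group axioms suffice. Finally, having $P^{-1}(R')\cap T(\bar F(R')) = \emptyset$, Lemma \ref{lemma:Consistency} gives $R' \preceq T(\bar F(R'))$, and $S \subseteq N(R)$ is nonempty because $(x,y) \in S \cap N(R)$; this is exactly weak expansiveness.

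**Main obstacle.** The delicate part is the transitive-closure step: unlike the plain theory closure $F$, here $\bar F$ mixes \emph{two} functions $\bar f \le f$ on the two coordinates, so composing an $S$-edge with an $R$-edge does not simply "telescope" via a single inverse the way it did in the proof of Proposition \ref{prop:Rationalization}. I expect to need the left- and right-order compatibility of $(\gF,\circ,\ge)$ repeatedly to keep the inequalities $\bar f \le f$ propagating through compositions, and to rule out that a long alternating chain $R, S, R, S, \dots$ sneaks around to create an asymmetry violation. The cleanest way to organize this is probably: (a) show $T(\bar F(R')) \subseteq R \cup S$ by a chain-length induction using $R = T(\bar F(R))$ and the closure of $S$ under the operations, then (b) observe that the only way $P^{-1}(R') \cap (R \cup S) \neq \emptyset$ is to have both $(z,w)$ and $(w,z)$ available, which for $(z,w) \in S$ forces an ordered-theory preimage relation contradicting $(x,y) \in N(R)$ exactly as in Lemma \ref{lemma:ExpansiveClosure}, the only new ingredient being the bookkeeping on the order. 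I would also double-check the edge case $\bar f = f = I$, which recovers the unordered argument verbatim.
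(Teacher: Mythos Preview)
Your plan has a genuine gap at the very first step. You claim that your set $S=\{(\bar f(x),f(y)):\bar f\le f\}$ is closed under $\bar F$, i.e.\ that $\bar F(R\cup S)\subseteq R\cup S$. This is false. A pair $(a,b)\in\bar F(S)$ arises from some $(\bar h(x),h(y))\in S$ (so $\bar h\le h$) together with $\bar g\le g$ via $a=\bar g^{-1}\bar h(x)$, $b=g^{-1}h(y)$. For $(a,b)$ to lie in $S$ you would need $\bar g^{-1}\bar h\le g^{-1}h$, but from $\bar g\le g$ (hence $\bar g^{-1}\ge g^{-1}$) and $\bar h\le h$ you cannot conclude this: the two inequalities pull in opposite directions. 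In the homothetic group $(\R_{++},\cdot,\le)$ take $\bar g=1,\ g=4,\ \bar h=1,\ h=2$; then $\bar g^{-1}\bar h=1>g^{-1}h=1/2$. So $\bar F(S)$ already escapes $S$ (and in fact generates all pairs $(h_1(x),h_2(y))$ with no order constraint on $h_1,h_2$). Once this fails, your chain argument collapses: edges of $\bar F(R')$ no longer carry the ``single direction'' structure you rely on, and your step (a) $T(\bar F(R'))\subseteq R\cup S$ is simply not available.

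The paper avoids this trap by taking $S=\{(x,y)\}$ --- a single pair, not an orbit --- and never attempting to show $T(\bar F(\bar R))\subseteq\bar R$. Instead it argues directly on a putative violating pair $(z,w)\in T(\bar F(\bar R))$ with $(w,z)\in P(\bar R)$. First it disposes of the length-one case by proving that $\bar F$ itself is weakly expansive (your analogy with Lemma~\ref{lemma:ExpansiveClosure} is apt, but it must be redone for $\bar F$ rather than $F$). For a nontrivial chain it takes a \emph{shortest} one and proves the key combinatorial fact: the edge $(x,y)$ appears \emph{exactly once}. The uniqueness proof uses the total order on $\gF$ in a case split (comparing the function at the first occurrence against the function at a hypothetical second occurrence) --- this is where the ordered-group hypothesis does real work, and it is the step your sketch is missing. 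With a single $(x,y)$-edge, the remaining $R$-edges on either side collapse via $R=T(\bar F(R))$ into two $R$-edges, and consistency of $R$ with the ordered theory then forces $(y,x)\in R$, contradicting $(x,y)\in N(R)$.
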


\begin{proof}
We consider $R=T(\bar F(R))$ such that $N(R)\ne \emptyset$.
Let $(x,y)\in N(R)$ and $\bar R = R\cup \{(x,y)\}$.
Suppose on the contrary to the weak expansiveness that there is $(z,w)\in T(\bar F(\bar R))$ and $(w,z)\in P(\bar R)$.
Since we are interested in the composition of the closures, we need to proceed proving the weak expansiveness sequentially. 
We start from making the claim that inner closure is weakly expansive, with the proceeding similarly to one of Lemma \ref{lemma:ExpansiveClosure}.

\begin{claim}
\label{claim:ExpansiveInnerClosure}
$\bar F:\gR\rightarrow \gR$ is weakly expansive.
\end{claim}

\begin{proof}[Proof of Claim \ref{claim:ExpansiveInnerClosure}]
Suppose on the contrary that there is $(z,w)\in \bar F(\bar R)$ and $(w,z)\in P(\bar R)$.
We start from showing that $(z,w)=(\bar f(x),y)$ for some $\bar f\le f\in\gF$.
Suppose the contrary, that is there is $(\bar x,\bar y)\ne (z,w)\in \bar R$ such that $(\bar f(\bar x),f(\bar y))\in \bar R$.
Since, $\bar R = R\cup\{(x,y\}$, then $(\bar f(\bar x),f(\bar y))\in R$ then $(z,w)\in F(R)=R$, that is a contradiction to the fact that $(w,z)\in P(R)$ as $P(\cdot)$ being the asymmetric part of the relation.

Hence, $(z,w)=(\bar f(x),f(y))$ for some $\bar f\le f\in\gF$.
Let us note that $f\ne I$, otherwise we would obtain an immediate contradiction.
Recall that $(\gF,\circ,\ge)$ is an ordered group, then there are $\bar f^{-1}\ge f^{-1}\in \gF$.
Since $(w,z)\in P(\bar R)$, then $(\bar f^{-1}(w),f^{-1}) = (y,x) \in R$ since $R$ is consistent with the theory.
That is contradiction to the fact that $(x,y)\in N(R)$.

\end{proof}

\bigskip

\noindent
Given Claim \ref{claim:ExpansiveInnerClosure} we know that $(z,w)\notin \bar F(\bar R)$.
Hence, $(z,w)\in T(\bar F(\bar R))$ implies there is a non-trivial sequence of $z=s_1,\ldots, s_n=y$ such that 
$$
(\bar f_j(s_j),f_j(s_{j+1}))\in \bar R \text{ for some } \bar f_j\le f_j \text{ for every } j\in\{1,\ldots,n-1\}.
$$
Let us consider one of the \emph{shortest sequences} that add $(z,w)$, i.e. such that there is no shorter sequence.
To complete the proof we need to show that there is exactly one entry of $(x,y)$ to this sequence.
This claim would significantly simplify our proof as the rest of the comparisons would belong to $R$.

\begin{claim}
\label{claim:XYUniqueEntry}
There is unique $k\le n-1$ such that $(\bar f_k(s_k),f_k(s_{k+1})) =(x,y)\in \bar R$.
\end{claim}

\begin{proof}[Proof of Claim \ref{claim:XYUniqueEntry}]
We start from illustrating why there should be at least one entry of $(\bar f_k(s_k),f_k(s_{k+1}) =(x,y)$.
If there is none, then $(\bar f_j(s_j),f_j(s_{j+1})\in R$ for every $j$ and therefore, $(z,w)\in R=T(\bar F(R))$.
That is a contradiction to the fact that $(w,z)\in P(R)$.
Given that there is at least one entry of $(x,y)$, we next proceed with proving that this entry should be unique.
Consider on the contrary that there are at least two entries and assume without loss of generality that $k<l$ and correspond to the first and second entry of the $(x,y)$ to the sequence.
Since the $(\gF,\circ,\geq)$ is ordered group, we can proceed by considering two cases.

\medskip

\noindent
{\bf Case 1: $f_k\le \bar f_l$.}
Given that $f_k\le\bar f_l $ and $\bar f_l \le f_l$, we can conclude that $f_k\le f_l$.
Recall that by assumption $f_k(s_{k+1})=y=f_l(s_{l+1})$.
Given that $(\gF,\circ,\ge)$ is an ordered group, there are $f_{k}^{-1}\ge f^{-1}_l\in\gF$.
Since $R$ is reflexive and consistent with the theory, then $(y,y)\in R$ implies that $(f_{k}^{-1}(y),f^{-1}_{l}(y)) =(s_{k+1},s_{l+1})\in R\subseteq \bar R$.
That is the very same sequence can be expressed without using the second entry of the $(x,y)$.
That is a contradiction.

\medskip

\noindent
{\bf Case 1: $f_k\le \bar f_l$.}
Since $k$ and $l$ correspond to the first and second entry of $(x,y)$, then $(f_k(s_{k+1}),\bar f_l(s_l))\in R$ as $R=T(\bar F(R))$.
Moreover, $f_k(s_{k+1})=y$ and $\bar f_l(s_l)=x$, therefore, $(y,x)\in R$ that is a contradiction to $(x,y)\in N(R)$.

\end{proof}

\bigskip

\noindent
To complete the proof let us appeal to the Claim \ref{claim:XYUniqueEntry}, we can ensure that there is a unique entry of $(x,y)$ to the sequence.
Given that $(w,z)\in R$, we can conclude that $(s_{k+1}, s_k)\in R$.
Given that $f_k\ge \bar f_k$ and $R$ is consistent with the theory, then $(y,x) = (f_k(s_{k+1}),\bar f_k(s_k))\in R$.
That is a contradiction to the fact that $(x,y)\in N(R)$.
Hence, $T\circ \bar F$ is weakly expansive.

\end{proof}

\bigskip

\noindent
{\bf (3) SAARP is equivalent to $R_E\preceq T(\bar F(R_E))$. }

\begin{lemma}
\label{lemma:SAARP2OrderedTheory}
If a data set satisfies SAARP, then $R_E\preceq T(\bar F(R_E))$.
\end{lemma}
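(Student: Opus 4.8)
The plan is to follow the sufficiency half of the proof of Proposition~\ref{prop:Rationalization}, but with the theory closure $F$ replaced by the ordered theory closure $\bar F$, and with the ordered group structure of $(\gF,\circ,\ge)$ together with regularity of the data set used to absorb the slack between the ``lower'' transformation $\bar f$ and the ``upper'' transformation $f$ that appear in an $\bar F$-witness. Since $\bar F$ is an increasing closure (Lemma~\ref{lemma:AlgebraicOrderedClosure}) and so is $T$ (Lemma~\ref{lemma:TransitiveClosure}), we have $R_E\subseteq T(\bar F(R_E))$, so by Lemma~\ref{lemma:Consistency} it suffices to show that $P^{-1}(R_E)\cap T(\bar F(R_E))=\emptyset$. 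I would argue by contradiction: pick $(x,y)$ in this intersection. From $(y,x)\in P(R_E)$ and reflexivity of $R_E$ we obtain $x\ne y$ and a budget $B_n\in\gB$ with $y=x_n\in C(B_n)$ and $x\in B_n$; moreover $x\notin C(B_n)$, since otherwise $(x,y)\in R_E$, so in fact $x\in B_n\setminus C(B_n)$.

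Next I would unwind $(x,y)\in T(\bar F(R_E))$ into a finite chain $x=s_1,\dots,s_n=y$ with $(s_j,s_{j+1})\in\bar F(R_E)$, that is, with $\bar f_j\le f_j$ in $\gF$ such that $(\bar f_j(s_j),f_j(s_{j+1}))\in R_E$ for each $j$. A link whose $R_E$-witness is reflexive has the form $s_{j+1}=h_j(s_j)$ with $h_j\le I$, and such a link can be absorbed into an adjacent link by composition (using $h_j\le I$ and one-sided monotonicity of $\circ$); if the whole chain collapses to $y=h(x)$ with $h\le I$, the contradiction is exactly the one produced at the end below. So assume no link is reflexive: then $x_j:=\bar f_j(s_j)\in C(B_j)$ and $f_j(s_{j+1})\in B_j$ for some $B_j\in\gB$. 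Using $s_{j+1}=\bar f_{j+1}^{-1}(x_{j+1})$ for $j\le n-2$ and $s_n=x_n$, set $g_j=f_j\circ\bar f_{j+1}^{-1}$ for $j\le n-2$ and $g_{n-1}=f_{n-1}$, so that $g_j(x_{j+1})\in B_j$ for every $j\in\{1,\dots,n-1\}$. SAARP applied to $x_1,\dots,x_n$ and $g_1,\dots,g_{n-1}$ then yields $\left[\Comp_{j=1}^{n-1}g_j\right]^{-1}(x_1)\notin B_n\setminus C(B_n)$.

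The heart of the argument is the computation $\Comp_{j=1}^{n-1}g_j=f_1\circ(\bar f_2^{-1}\circ f_2)\circ\cdots\circ(\bar f_{n-1}^{-1}\circ f_{n-1})$, obtained by telescoping. Since $\bar f_j\le f_j$ gives $f_j^{-1}\le\bar f_j^{-1}$ and hence $\bar f_j^{-1}\circ f_j\ge I$, and since a composition of elements $\ge I$ is again $\ge I$ in an ordered group, we get $\Comp_{j=1}^{n-1}g_j\ge f_1$, so $\left[\Comp_{j=1}^{n-1}g_j\right]^{-1}\le f_1^{-1}$. Writing $x_1=\bar f_1(s_1)=\bar f_1(x)$ and $\phi:=\left[\Comp_{j=1}^{n-1}g_j\right]^{-1}\circ\bar f_1$, this yields $\left[\Comp_{j=1}^{n-1}g_j\right]^{-1}(x_1)=\phi(x)$ with $\phi\le f_1^{-1}\circ\bar f_1\le I$. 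Now $x\in B_n$ and $\phi\le I$, so $\phi(x)\in B_n$ by downward closure of budgets (regularity); together with the SAARP conclusion this forces $\phi(x)\in C(B_n)$. If $\phi=I$ then $x=\phi(x)\in C(B_n)$, contradicting $x\in B_n\setminus C(B_n)$; and if $\phi<I$ then, since $\phi(x)\in C(B_n)$, the ``chosen point lies on the boundary'' part of regularity, applied with $I>\phi$, gives $x=I(x)\notin B_n$, again a contradiction. Either way the supposed element of $P^{-1}(R_E)\cap T(\bar F(R_E))$ cannot exist.

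I expect the main obstacle to be bookkeeping rather than a conceptual one: correctly disposing of reflexive links so that each $\bar f_j(s_j)$ is genuinely a chosen point, and, above all, keeping track of the direction of every inequality as it passes through inverses and through one-sided compositions in the possibly noncommutative ordered group---it is precisely this sign-chasing that converts the family of slacks $\bar f_j\le f_j$ into the single inequality $\phi\le I$ on which the two regularity conditions can be brought to bear.
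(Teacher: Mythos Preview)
Your proposal is correct and follows the same contradiction strategy as the paper's own proof: unwind the $T(\bar F(R_E))$-chain, extract chosen points $x_j\in C(B_j)$, build auxiliary transformations to feed into SAARP, and finish via regularity at the terminal budget $B_n$.

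There is one small but genuine technical difference worth flagging. The paper sets $\hat f_j = f_j\circ f_{j+1}^{-1}$, which makes the telescope collapse exactly to $f_1$, but at the price of invoking the downward-closure half of regularity at \emph{every} intermediate step (to pass from $[f_j\circ\bar f_{j+1}^{-1}](x_{j+1})\in B_j$ to $[f_j\circ f_{j+1}^{-1}](x_{j+1})\in B_j$). You instead take $g_j = f_j\circ\bar f_{j+1}^{-1}$, which lands in $B_j$ for free but leaves a residual factor $\bar f_2^{-1}\circ f_2\circ\cdots\circ\bar f_{n-1}^{-1}\circ f_{n-1}\ge I$ in the telescoped composition; you then absorb all the slack into a single inequality $\phi\le I$ and invoke both parts of regularity only once, at the end. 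Either route works: yours concentrates the use of regularity, while the paper's keeps the final algebra trivial. You are also more careful than the paper about links in the chain whose $R_E$-witness comes from the reflexive part of $R_E$; the paper silently assumes every $\bar f_j(s_j)$ is a genuine chosen point.
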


\begin{proof}
Suppose on the contrary that there is $(z,w)\in T(\bar F(R_E))$ and $(w,z) \in P(R_E)$.
Since $(z,w)\in T(\bar F(R_E))$, then there is a sequence $z=s_1,\ldots, s_n$ such that for every $j\in\{1,\ldots,n-1\}$ there are $\bar f_{j}\le f_j\in \gF$ such that
$$
(\bar f_j(s_j),f_j(s_{j+1})) \in R_E.
$$
Since $(\bar f_j(s_j),f_j(s_{j+1})) \in R_E$, the the construction of preference relation implies that $\bar f_j(s_j) = x_j\in C(B_j)$ for some $B_j\in\gB$.
Recall that $(\gF,\circ,\ge)$ is ordered group, then there are $f_j^{-1}\le \bar f_j^{-1}\in\gF$.
Hence, $f_j(s_{j+1}) = [f_j\circ\bar f^{-1}_{j+1}](x_{j+1})\in B_j$, then $[f_j\circ f^{-1}_{j+1}](x_{j+1})\in B_j$ given that data set is regular and $[f_j\circ\bar f^{-1}_{j+1}](x_{j+1})\le [f_j\circ f^{-1}_{j+1}](x_{j+1})$.
Hence, denoting by $\hat f_j = [f_j\circ f^{-1}_{j+1}]$ for every $j\le n-1$ and $\hat f_{n-1} = f_{n-1}$ we obtain 
$$
\hat f_j(x_{j+1}) \in B_j \text{ for every } j\in \{1,\ldots, n-1\}.
$$
Then,
\begin{align*}
\left[ \Comp_{j=1}^{n-1} \hat f_j \right]^{-1} = \left[ [f_1\circ f^{-1}_2]\circ [f_2\circ f^{-1}_3]\circ \ldots \circ[f_{n-2}\circ f^{-1}_{n-1}]\circ [f_{n-1}] \right]^{-1} = f_1^{-1}.
\end{align*}
Recall that $f^{-1}\le \bar f^{-1}$ and $\bar f^{-1}(x_1)\in B_{n}\setminus C(B_n)$, then $\bar f^{-1}(x_1)\in B_{n}\setminus C(B_n)$ since data set is regular. 
Therefore, the following holds
$$
\left[ \Comp_{j=1}^{n-1} \hat f_j \right]^{-1}(x_1) = f_1^{-1}(x_1) \in B_n\setminus C(B_n)
$$
that is a contradiction of SAARP.

\end{proof}

\begin{proof}[Proof of Proposition \ref{prop:CompleteRationalization}]
{\bf ($\Rightarrow$)}
Necessity proof follows argument similar to one in proof of Proposition \ref{prop:Rationalization}. 
Therefore, we leave this an exercise to the reader.

\bigskip

\noindent
{\bf ($\Rightarrow$)}
Lemma \ref{lemma:SAARP2OrderedTheory} guarantees that if data set satisfies SAARP, then $R_E\preceq T(\bar F(R_E))$.
Then, given that we have shown that $T(\bar F(R_E))$ is weakly expansive algebraic closer, we can apply Demuynck Extension Theorem (see Lemma \ref{lemma:DemuynckExtensionThm}) to guarantee that there is a complete $R^*$ such that $R_E\preceq R^* =T(\bar F(R^*))$.
Since, $T(\bar F(R^*))=R^*$ it is consistent with ordered theory and transitive (see Lemmas \ref{lemma:OrderedConsistentFixedPoint} and \ref{lemma:OrderedClosureTransitivity}).
Given that $R_E\preceq R^*$ then, $R^*$ rationalizes the data set.

\end{proof}

\section{Proofs for Section \ref{sec:Applications} }

\setcounter{lemma}{0}
\renewcommand{\theprop}{B.\arabic{prop}}
\setcounter{prop}{0}
\renewcommand{\thelemma}{B.\arabic{lemma}}
\setcounter{defn}{0}
\renewcommand{\thedefn}{B.\arabic{defn}}
\setcounter{cor}{0}
\renewcommand{\thecor}{B.\arabic{cor}}

\noindent
Since Corollaries \ref{cor:TransitiveApplication}, \ref{cor:HomotheticApplication}, \ref{cor:QuasilinearApplication} and \ref{cor:IndependentApplication} immediately follows from Propositions \ref{prop:Rationalization} and \ref{prop:CompleteRationalization} we omit these proofs. 
Instead we concentrate on the proofs for Corollaries \ref{cor:TwoStageApplication} and \ref{cor:GoodEnoughApplication}.

\subsection{Proof of Corollary \ref{cor:TwoStageApplication}}
\begin{proof}
{\bf ($\Rightarrow$)}
Consider there is a violation of S-SAARP, then for every $C(B)\subseteq\Gamma(B)\subseteq B$ either (i) $(\gB,\Gamma)$ violates WAARP or (ii) $(\Gamma(\gB),C)$ violates SAARP.
If $(\gB,\Gamma)$ then as result of Proposition \ref{prop:WeakRationalization} there is no binary relation $R_1$ that is consistent with the theory $\gF_1$ such that $\Gamma(B) = \max(B,R_1)$, that is a contradiction.
If $(\Gamma(\gB),C)$ violates SAARP, then there is no complete and transitive binary relation consistent with ordered theory $\gF_2$ such that $C(B)=\max(\Gamma(B),R_2)$, that is a contradiction.

\bigskip

\noindent
{\bf ($\Leftarrow$)}
To proceed with sufficiency assume that S-SAARP is satisfied.
Then, there is $C(B)\subseteq\Gamma(B)\subseteq B$, such that (i) $(\gB,\Gamma)$ satisfies WAARP and (ii) $(\Gamma(\gB),C)$ satisfies SAARP.
Since $(\gB,\Gamma)$ satisfies WAARP, then there is a binary relation $R_1$ that is consistent with $\gF_1$ and $\Gamma(B) = \max(B,R_1)$ due to Proposition \ref{prop:WeakRationalization}.
Since $(\Gamma(\gB),C)$ satisfies SAARP being a regular data set, then Proposition \ref{prop:CompleteRationalization} implies that there is a complete and transitive binary relation consistent with the theory $\gF_2$ such that $C(B)=\max(\Gamma(B),R_2)$.

\end{proof}

\subsection{Proof of Corollary \ref{cor:GoodEnoughApplication}}
\begin{proof}
{\bf ($\Rightarrow$)}
Assume that data set is good enough rationalizable.
Then, given the complete and transitive binary relation $R^*$ that is consistent with the theory, there are $\Gamma(B)\supseteq M^k_R(B)$ such that
$$
C(B) = \max(\Gamma(B)\cup C(B),R^*).
$$

\noindent
Let $\bar B^0 = \Gamma(B)\cup C(B)$, and $\bar B^y = \{y\}\cup \Gamma(B)$ for every $y\in B\setminus (\Gamma(B)\cup C(B))$, where $\bar C(\bar B^0)=C(B)$ and $\bar C(\bar B^y) = y$.
Since $\Gamma(B)\supseteq M^k_R(B)$, then $y$ is better than every alternative in $\Gamma(B)$, that is
$$
y = \max(\bar B^y,R^*).
$$
This collection generates a truncated data set $(\bar \gB,\bar C)$.
Then, Proposition \ref{prop:CompleteRationalization} implies that $(\bar \gB,\bar C)$ satisfies SAARP.
Hence, GE-SAARP is satisfied as we just constructed the regular truncated data set that satisfies SAARP.

\bigskip

\noindent
{\bf ($\Leftarrow$)}
Assume that GE-SAARP is satisfied.
Then, there is a complete and transitive binary relation $R^*$ that is consistent with ordered theory $\gF$ and chosen points are maximal.
Given that we modified the underlying experiment we need to construct a different revealed preference relation.
Note that $(x,y)\in \bar R_E$ if $x\in C(\bar B)$ and $y\in \bar B$ for some $\bar B\in \bar \gB$.
Let 
$$
B^{\uparrow} = B\setminus B^{\downarrow} \text{ for every } B\in \gB.
$$

\noindent
Note that by construction $(x,y) \in \bar R_E$, we know that every $x\in B^{\uparrow}$ and $y\in B^{\downarrow}$.
Note that by construction $\vert B^{\uparrow}\vert\le k$.
Recall that by construction $\bar R_E\preceq R^*$, that in according to the extension, every alternative from $B^{\uparrow}$ is better than every alternative form $B^{\downarrow}$.
Hence, we can conclude that $\Gamma(B) = B^{\downarrow} \subseteq M^k_{R^*}(B)$.

\end{proof}

\section{Auxiliary Results}

\setcounter{lemma}{0}
\renewcommand{\theprop}{C.\arabic{prop}}
\setcounter{prop}{0}
\renewcommand{\thelemma}{C.\arabic{lemma}}
\setcounter{defn}{0}
\renewcommand{\thedefn}{C.\arabic{defn}}
\setcounter{cor}{0}
\renewcommand{\thecor}{C.\arabic{cor}}

\noindent
There are two groups of auxiliary results we need.
First group are the results about the compositions of closures.
That is, we show that some papers are transfered to the composition of the closures.
Second group are the basic properties of the ordered theories used in the proofs above.

\begin{lemma}
\label{lemma:Closures}
Let $F_1,F_2:\gR\rightarrow\gR$ are algebraic closures, then $F = F_2\circ F_1:\gR\rightarrow\gR$ is increasing, monotone and algebraic.
\end{lemma}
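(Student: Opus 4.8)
The plan is to check the three asserted properties of $F = F_2 \circ F_1$ one at a time, in each case reducing to the same property (plus, for the last one, monotonicity) of the factors $F_1$ and $F_2$. Note that idempotence is intentionally absent from the statement, so nothing needs to be said about it; only \textbf{increasing}, \textbf{monotone}, and \textbf{algebraic} are at stake.

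For \textbf{increasing}: since $F_1$ is increasing, $R \subseteq F_1(R)$; since $F_2$ is increasing, $F_1(R) \subseteq F_2(F_1(R))$; chaining these gives $R \subseteq F(R)$. For \textbf{monotone}: if $R \subseteq R'$, then $F_1(R) \subseteq F_1(R')$ by monotonicity of $F_1$, and applying monotonicity of $F_2$ to this inclusion gives $F(R) = F_2(F_1(R)) \subseteq F_2(F_1(R')) = F(R')$. Both of these are one-line chases of definitions.

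The substantive step is \textbf{algebraic}, which I would prove by a two-stage ``peeling'' argument. Start from $(x,y) \in F(R) = F_2(F_1(R))$ and apply algebraicity of $F_2$ to get a \emph{finite} relation $S \subseteq F_1(R)$ with $(x,y) \in F_2(S)$. Because $S$ is finite, write $S = \{(a_1,b_1),\ldots,(a_m,b_m)\}$; algebraicity of $F_1$ then yields, for each $i$, a finite $R_i \subseteq R$ with $(a_i,b_i) \in F_1(R_i)$. Put $R' = \bigcup_{i=1}^m R_i$; this is a finite union of finite relations, hence finite, and $R' \subseteq R$. Monotonicity of $F_1$ gives $(a_i,b_i) \in F_1(R_i) \subseteq F_1(R')$ for all $i$, so $S \subseteq F_1(R')$, and then monotonicity of $F_2$ gives $(x,y) \in F_2(S) \subseteq F_2(F_1(R')) = F(R')$, as required.

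The only point that needs care — and the reason both layers of the argument are genuinely needed — is that one must know $S$ is finite before forming the union $\bigcup_{i=1}^m R_i$; this finiteness is precisely what algebraicity of $F_2$ (not merely its monotonicity) supplies. Beyond that bookkeeping there is no real obstacle: the lemma is essentially the statement that ``being a closure-type operator with the finitary witness property'' is preserved under composition, and the proof is just a careful unwinding of the definitions.
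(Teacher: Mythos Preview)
Your proof is correct and follows essentially the same approach as the paper: the increasing and monotone properties are chained directly from those of $F_1$ and $F_2$, and algebraicity is proved by the same two-stage peeling (first use algebraicity of $F_2$ to get a finite $S\subseteq F_1(R)$, then algebraicity of $F_1$ on each pair in $S$, then take the finite union). If anything, you are slightly more explicit than the paper in noting that monotonicity of $F_1$ and $F_2$ is what justifies the final inclusions $S\subseteq F_1(R')$ and $F_2(S)\subseteq F_2(F_1(R'))$.
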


\begin{proof}
{\bf $F$ is increasing.}
\\
If $(x,y)\in R$, then $(x,y)\in F_1(R)$ since $F_1$ is increasing.
If $(x,y)\in F_1(R)$, then $(x,y)\in F_2(F_1(R))$, since $F_2$ is increasing.
Hence, $(x,y)\in R$, then $(x,y)\in F(R) = F_2(F_1(R))$.

\bigskip

\noindent
{\bf $F$ is monotone.}
\\
If $R\subseteq R'$, then $F_1(R)\subseteq F_1(R')$ since $F_1$ is monotone.
Given $F_1(R)\subseteq F_1(R')$, then $F_2(F_1(R))\subseteq F_2(F_1(R'))$ since $F_2$ is monotone.
Hence, $R\subseteq R'$ implies that $F_2(F_1(R)) = F(R) \subseteq F(R') = F_2(F_1(R'))$ since $F_2$.

\bigskip

\noindent
{\bf $F$ is algebraic.}
\\
Consider $(x,y)\in F(R) = F_2(F_1(R))$.
Given that $F_2$ is algebraic then there is a finite $D\subseteq F_1(R)$ such that $(x,y)\in F_2(F_1(D))$.
Given that $D$ is finite, then for every $(x_i,y_i)\in D\subseteq F_1(R)$ there is $D_i$ such that $(x_i,y_i)\in F_1(D_i)$, given that $F_1$ is algebraic as well.
Hence, let 
$$
\bar D = \bigcup\limits_{(x_i,y_i)\in D} D_i
$$ 
then by construction $(x,y)\in F_2(F_1(\bar D))$.
Therefore, $F=F_2\circ F_1$ is algebraic.

\end{proof}

\begin{lemma}
\label{lemma:Algebra}
Let $(\gF,\circ,\ge)$ be ordered group, then 
\begin{itemize}
	\item [(i)] $f\ge f'$ and $\bar f\ge \bar f'$ implies $f\circ\bar f\ge f'\circ\bar f'$,
	\item [(ii)] $f\ge \bar f$ implies $f^{-1}\le \bar f^{-1}$.
\end{itemize}
\end{lemma}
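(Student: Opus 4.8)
The plan is to derive both claims directly from the two one-sided monotonicity axioms defining an ordered group, together with transitivity of the complete order $\ge$ and the group identities $f\circ f^{-1}=f^{-1}\circ f=I$ and $I\circ f=f\circ I=f$. Nothing beyond these is needed, so the proof is a short chain of axiom invocations; the only thing to be careful about is that $\circ$ is not assumed commutative, so the left- and right-monotonicity axioms must be applied with the common factor on the correct side.

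For part (i), I would first apply the right-monotonicity axiom to $f\ge f'$ with common factor $\bar f$, obtaining $f\circ\bar f\ge f'\circ\bar f$. Then I would apply the left-monotonicity axiom to $\bar f\ge\bar f'$ with common factor $f'$, obtaining $f'\circ\bar f\ge f'\circ\bar f'$. Since $\ge$ is a complete order and hence transitive, chaining these two inequalities gives $f\circ\bar f\ge f'\circ\bar f'$, which is the claim.

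For part (ii), starting from $f\ge\bar f$, I would left-multiply by $f^{-1}$ via left-monotonicity to get $f^{-1}\circ f\ge f^{-1}\circ\bar f$, i.e.\ $I\ge f^{-1}\circ\bar f$. Then I would right-multiply this inequality by $\bar f^{-1}$ via right-monotonicity to get $I\circ\bar f^{-1}\ge f^{-1}\circ\bar f\circ\bar f^{-1}$, and simplify both sides using associativity and the group identities: the left side is $\bar f^{-1}$ and the right side is $f^{-1}\circ(\bar f\circ\bar f^{-1})=f^{-1}\circ I=f^{-1}$. Hence $\bar f^{-1}\ge f^{-1}$, equivalently $f^{-1}\le\bar f^{-1}$.

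I do not anticipate any genuine obstacle, since every step is a single use of an ordered-group axiom or a group identity. The one pitfall to watch is keeping the left/right versions of monotonicity straight and inserting the common factor on the correct side at each step; a secondary point is simply making explicit the use of associativity and the identity element when collapsing $I\circ\bar f^{-1}$ and $f^{-1}\circ\bar f\circ\bar f^{-1}$ in part (ii).
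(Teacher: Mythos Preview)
Your proposal is correct and follows essentially the same approach as the paper: for (i) you apply right-monotonicity to $f\ge f'$ and left-monotonicity to $\bar f\ge\bar f'$ and chain via transitivity, and for (ii) you left-multiply by $f^{-1}$ and then right-multiply by $\bar f^{-1}$, exactly as the paper does. The only difference is that you are slightly more explicit about which monotonicity axiom is used at each step and about the role of associativity and the identity in the final simplification.
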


\begin{proof}
{\bf Part (i).}
\\
If $f\ge f'$, then $f\circ \bar f\ge f'\circ \bar f$, since  $(\gF,\circ,\ge)$ is an ordered group.
If $\bar f\ge \bar f'$ then $f'\ge \bar f\ge f'\circ \bar f'$, since  $(\gF,\circ,\ge)$ is an ordered group.
Then,
$$
f\circ \bar f\ge f'\circ \bar f\ge f'\circ \bar f' \Rightarrow f\circ \bar f \ge f'\circ \bar f'.
$$

\bigskip

\noindent
{\bf Part (ii).}
\\
Since $(\gF,\circ,\ge)$ is an ordered group and $f\ge\bar f$ we can conduct the following reasoning
\begin{align*}
f\ge \bar f 
& \Rightarrow f^{-1}\circ f\ge f^{-1}\circ \bar f 
\Rightarrow I \ge f^{-1}\circ \bar f
\Rightarrow \\ 
& \Rightarrow I\circ \bar f^{-1} \ge f^{-1}\circ \bar f \circ \bar f^{-1}
\Rightarrow \bar f^{-1}\ge f^{-1}.
\end{align*}

\end{proof}

\clearpage

\bibliographystyle{plainnat}
\bibliography{GRP.bib}

\end{document}